\newcommand{\Torus}{\mathbb{T}}
\newcommand{\Real}{\mathbb{R}}
\newcommand{\Amoeba}[1]{\mathcal{A}_{#1}}
\newcommand{\Newton}[1]{\mathcal{N}_{#1}}
\newcommand{\Contour}[1]{\mathcal{C}_{#1}}
\spnewtheorem*{theorem*}{Theorem}{\bf}{\it}
\spnewtheorem*{remark*}{Remark}{\bf}{\it}
\spnewtheorem{state}{Statement}{\bf}{\it}
\newcommand{\Log}{\textup{Log}}
\begin{document}
\title{Amoebas of complex hypersurfaces in statistical thermodynamics}
\author{\fbox{Mikael Passare\inst{1}} \and Dmitry Pochekutov\inst{2} \and August Tsikh\inst{3}
}                     
\institute{
Department of Mathematics, Stockholm University, Stockholm, Sweden.\and
Institute of Core Undergraduate Programmes, Siberian Federal University, Russia.\\
E-mail: potchekutov@gmail.com\and
Institute of Mathematics, Siberian Federal University, Krasnoyarsk, Russia.\\
E-mail: tsikh@lan.krasu.ru  
}
%
%
%
\maketitle
\begin{abstract}
The amoeba of a complex hypersurface is its image under a logarith\-mic projection. 
A number of properties of algebraic hypersurface amoebas are carried over to the 
case of  transcendental hypersurfaces. We demonstrate the potential that amoebas
can bring into statistical physics by considering the problem of energy distribution in a quantum
thermodynamic ensemble. The spectrum $\{\varepsilon_k\}\subset \mathbb{Z}^n$
of the ensemble is assumed to be multidimensional; this leads us to the notions of a multidimensional
temperature and a vector of differential thermodynamic forms.  Strictly speaking, in the paper
we develop the multidimensional Darwin and Fowler method and give the description
of the domain of admissible average values of energy for which the thermodynamic limit exists.

\end{abstract}
\section{Introduction}

The amoeba of a complex hypersurface~$V$ defined in a Reinhardt domain is the image of $V$
in the logarithmic scale. The notion of the amoeba of an algebraic hypersurface, introduced
in~\cite{Gelfand}, plays the fundamental role in the study of zero distributions of
polynomials in $\mathbb{C}^n$. Over the last decade, the amoebas proved to be a useful tool
and a convenient language in the diverse questions; such as the  classification of topological types
of Harnack curves~\cite{Mikhalkin}, description of phase diagrams of dimer models~\cite{KO,KOS},
study of the asymptotic behavior of solutions to multidimensional difference equations~\cite{LPT}.
Adelic (non-archimedeam) amoebas turned out to be helpful in the computation of nonexpansive
sets for  dynamical systems~\cite{EKL}.

The main purpose of the present paper  is to demonstrate the advantages of using the amoebas
in statistical physics. As an example of such usage  we consider the statistical problem of finding
the preferred states  of the thermodynamic ensemble when its spectrum is discrete. 

In the classical formulation of this problem, which was studied by Maxwell, Boltzmann and Gibbs, it is assumed
that the energy levels occupied by the ensemble systems form a discrete \textit{one}-\textit{dimensional}
spectrum  $\{\varepsilon_k\}\subset \mathbb{N}=\{0,1,2,\ldots\}$  (see, for example, \cite{Sh,Zo}). By contrast,
we consider the case of the  \textit{multidimensional} spectrum $\{ \varepsilon_k\}\subset \mathbb{Z}^n,\ n> 1$, and then
such important notions as the temperature and the differential thermodynamic form become vector quantities. 

In fact,  the major part of the paper is devoted to the generalization  of the asymptotic Darwin-Fowler method~\cite{DF1,DF2},
that gives a way to describe the state of a quantum thermodynamic ensemble with the multidimensional spectrum.
For this purpose, we introduce the notion of an amoeba of a general (not only algebraic) complex hypersurface
and describe the structure of the amoeba complement (Theorem~\ref{thm:1}).  Next, we prove an asymptotic formula (Theorem~\ref{thm:2}) for the diagonal Laurent coefficient of a meromorphic function;  the polar hypersurface, its amoeba and the
logarithmic Gauss mapping are significantly used in the proof.

There are two main reasons motivating to apply methods of the theory of amoebas to the asymptotic investigation of the Laurent
coefficient of a meromorphic function in several complex variables.  First, the connected components of the amoeba
complement are in one-to-one correspondence with the Laurent expansions of a meromorphic function centered at the origin;
and, moreover, define their domain of convergence. Second, by the multidimensional residues, the asymptotics of the
Laurent coefficient is represented by the oscillating integral over a chain on a polar hypersurface~$V$. 
In the logarithmic scale, the critical points of the phase function of such integral comprise  the contour of the amoeba of $V$.

Thus, our generalization of the Darwin-Fowler method (Sect.~\ref{sec:6}) is grounded on  Theorems~\ref{thm:1} and \ref{thm:2}.
Theorem~\ref{thm:3} provides the asymptotics of the average values for occupation  numbers of  energy $\varepsilon_k$ from a given spectrum. These average values are  expressed by the  Laurent coefficients of the meromorphic function constructed
by means of the partition  function of an ensemble. Although Theorem~\ref{thm:3} requires tricky 
integration techniques, its statement  is a quite expected generalization of the Darwin-Fowler results. 
This is not the case with Theorem~\ref{thm:4},  which is totally inspired by the geometry brought in our  investigation
 by the theory of amoebas. Theorem~\ref{thm:4}  gives the answer to the question  whether an average energy of an  ensemble
permits the thermodynamical limit. Namely, the domain of admissible average  energies coincides with the  interior of the convex hull
of the spectrum.

\section{Amoebas of complex hypersurfaces}
\label{sec:1}

For convenience we shall denote by
$\Torus^n$ the set $\left(\mathbb{C}\setminus\{0\}\right)^n$.

\begin{definition}[\cite{Gelfand}]
The \emph{amoeba} $\mathcal{A}_V$ of a complex algebraic hypersurface
$$V=\{ z\in\Torus^n: Q(z)=0 \}$$
(or of the polynomial $Q$) is the image of $V$ under the mapping $\Log\colon\Torus^n\to\Real^n$,
determined by the formula
$$\Log\colon (z_1,\dots,z_n)\mapsto(\log|z_1|,\dots,\log|z_n|).$$
\end{definition}

The term amoeba is motivated by the specific appearance of $\Amoeba{V}$ in the case $n=2$. It has a shape with thin tentacles going off to infinity
(see Fig.~\ref{fig:1}). The complement $\mathbb{R}^n\setminus \mathcal{A}_V$ consists of a finite number of connected components, which are open and convex \cite{Gelfand}. The basic results on amoebas of algebraic hypersurfaces can be found in
\cite{Mikhalkin,FPT,PaRu,PT}.

\begin{figure}
\centering
\resizebox{0.75\textwidth}{!}{%
  \includegraphics{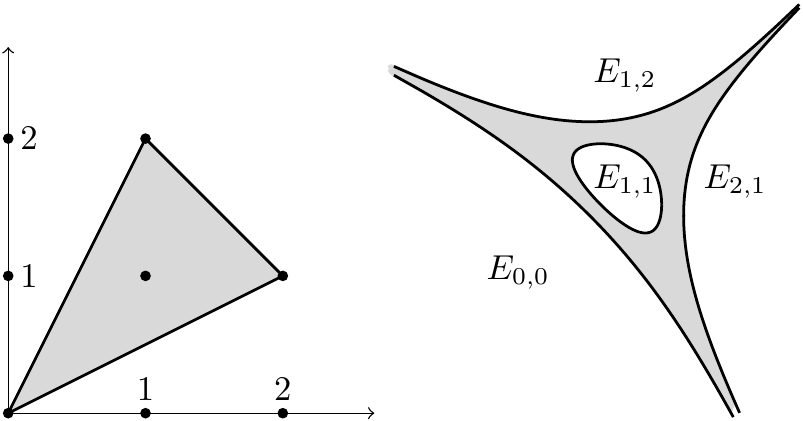}
}
\caption{The Newton polytope (left) and the amoeba with its complement components $E_\nu$ (right) for the polynomial
$Q(z)=z_1^2 z_2 - 4z_1z_2 + z_1z_2^2 + 1$.}
\label{fig:1}       
\end{figure}

We denote by $\Newton{Q}$ the \emph{Newton polytope} of the polynomial $Q$, that is, the convex hull in $\mathbb{R}^n$ of all the
exponents of the monomials occurring in the polynomial $Q$. For each integer point $\nu\in \mathcal{N}_Q$ we define the dual cone
$C_\nu$ to the polytope $\mathcal{N}_Q$ at the point $\nu$ to be the set
$$
C_\nu=\{s\in \mathbb{R}^n: \left<s,\nu\right>=\max_{\alpha\in N_Q} \left<s,\alpha\right> \}.
$$ 
We recall that the \emph{recession cone} of a convex set $E\subset \mathbb{R}^n$ is the largest cone, which after a suitable translation
is contained in $E$. The connection between the combinatorics of the Newton polytope $\mathcal{N}_Q$ of the polynomial $Q$ and the 
structure of the complement of the amoeba $\mathcal{A}_V$ is described by the following result.

\begin{theorem*}[\cite{FPT}]
On the set $\{ E\}$ of connected components of the complement $\mathbb{R}^n\setminus \mathcal{A}_V$ there exists an injective order
function
$$\nu\colon \{E\} \to \mathbb{Z}^n\cap \Newton{Q}$$
such that the dual cone $C_{\nu(E)}$ to the Newton polytope at the point $\nu(E)$ is equal to the recession cone of the component $E$.
\end{theorem*}

This means that the connected components of the complement $\mathbb{R}^n\setminus \mathcal{A}_V$ can be labelled as $E_\nu$ by
means of the integer vectors $\nu=\nu(E) \in \Newton{Q}$ (see Fig.~\ref{fig:1}).

The value $\nu(E)$ of the order function allows for two interpretations. On the one hand, $\nu(E)$ is the gradient
of the restriction to $E$ of the Ronkin function for the polynomial $Q$ (see \cite{PaRu}). The Ronkin function is a multidimensional analogue of Jensen's function and finds numerous applications in the theory of value distribution of meromorphic functions.
On the other hand,  components of the vector $\nu(E)$  are the linking numbers
of the basis loops in the torus $\textup{Log}^{-1}(x)$, for any $x\in E$, and the hypersurface $V$ (see \cite{FPT} or \cite{Mikhalkin}).

\begin{remark*}
The set $\textup{vert}\, \mathcal{N}_Q$ of vertices of the polytope $\Newton{Q}$ belongs to the image of the order function $\nu$. In other words, 
for each vertex $\beta\in\Newton{Q}$ there is a component $E_\beta$ with recession cone $C_\beta$ (\cite{Gelfand,MY}). The
existence of components $E_\nu$ corresponding to other integer points $\nu\in \Newton{Q}\setminus \textup{vert}\, \mathcal{N}_Q$ depends on the coefficients of the polynomial $Q$.
\end{remark*}  

There is a bijective correspondence between the connected components $\{E_\nu\}$ of the complement $\mathbb{R}^n\setminus \mathcal{A}_V$ 
and the Laurent expansions (centered at the origin) of an irreducible rational fraction $F(z)=P(z)/Q(z)$ 
(see \cite[Sect.~6.1]{Gelfand}). The sets $\Log^{-1}(E_{\nu})$ are the domains of convergence for the corresponding Laurent expansions. One may therefore label such an expansion using the components of the amoeba complement, or using the integer points in
the Newton polytope. For instance, the Taylor expansion of a function that is holomorphic at the origin will always correspond 
to the vertex of the Newton polytope $\Newton{Q}$ with coordinates  $(0,\ldots, 0)$.

In Sects.~\ref{sec:4}-\ref{sec:6} we shall see that, when working with partition functions, one needs to consider amoebas also of non-algebraic complex hypersurfaces.
Let $Q$ be a Laurent series in the variables $z=(z_1,\ldots, z_n)$:
$$
Q(z)=\sum_{\alpha\in A\subset \mathbb{Z}^n} a_\alpha z^{\alpha}\,.
$$
We assume that its domain of convergence is non-empty, and that $Q(z)\not\equiv 0$. 
We shall also make the assumption that $Q$ does have zeros in $G\cap \Torus^n$. Let
$$
V=\{z\in G\cap \Torus^n: Q(z)=0\}
$$
be the hypersurface given by the zeros of the analytic function $Q(z)$. The amoeba for $V$ is defined as in the algebraic case: 
$\mathcal{A}_V=\Log(V)$.

We introduce the notation $\mathcal{G}=\Log (G)$ for the image of the convergence domain $G$ of the series $Q$. It is well known that
$\mathcal{G}$ is a convex domain. In the algebraic case, when $Q$ is a polynomial, the set $\mathcal{G}$ is all of $\mathbb{R}^n$, and
the amoeba $\mathcal{A}_V$ is a proper subset of $\mathcal{G}$. In the general case it may well happen that there is an equality
$\mathcal{A}_V=\mathcal{G}$. To avoid this situation, we require that the summation support $A$ of the series $Q$ lies in some acute
cone, that is, the closure $\mathcal{N}$ of the convex hull $\textup{ch}(A)$ does not contain any lines.

\begin{theorem}
\label{thm:1}
If for the series $Q$ the set $\mathcal{N}=\overline{\textup{ch}(A)}$ does not contain any lines, then the complement
$\mathcal{G}\setminus \mathcal{A}_V$ is non-empty. To the set $\{\nu\}$ of vertices of the polyhedron $\mathcal{N}$ there corresponds
a family $\{E_\nu\}$ of pairwise distinct connected components of the complement $\mathcal{G}\setminus \mathcal{A}_V$. The dual cone
$C_\nu$ to $\mathcal{N}$ at the vertex $\nu$ coincides with the recession cone for $E_\nu$.
\end{theorem}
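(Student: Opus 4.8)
The plan is to attach to every component of $\mathcal{G}\setminus\mathcal{A}_V$ an integer \emph{order vector} and then to recognize each vertex of $\mathcal{N}$ as the order of an explicitly constructed component. For $x\in\mathcal{G}\setminus\mathcal{A}_V$ the torus $\Log^{-1}(x)$ misses $V$, so $Q$ is zero-free on it and I would set
\[
\nu_k(x)=\frac{1}{(2\pi i)^n}\int_{\Log^{-1}(x)}\frac{z_k\,\partial Q/\partial z_k}{Q(z)}\,\frac{dz_1}{z_1}\wedge\cdots\wedge\frac{dz_n}{z_n},\qquad k=1,\dots,n .
\]
Each $\nu_k(x)$ is a winding number, hence an integer, and the integral varies continuously with $x$, so $x\mapsto\nu(x)$ is locally constant and descends to a map $\nu\colon\{E\}\to\mathbb{Z}^n$. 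Two components carrying different values of $\nu$ are then automatically distinct, so it suffices to produce, for every vertex $\nu$ of $\mathcal{N}$, a component on which $\nu(\cdot)\equiv\nu$, and afterwards to compute its recession cone.

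The geometric engine is the domination of $Q$ by its vertex monomial. I would fix an interior point $x_1$ of $\mathcal{G}$, so that $S:=\sum_{\alpha\in A}|a_\alpha|e^{\langle x_1,\alpha\rangle}<\infty$. Since $\mathcal{N}$ contains no lines each vertex $\nu$ has a full-dimensional dual cone $C_\nu$ — this is exactly where the hypothesis is used. For $v\in C_\nu$ one has $\langle v,\alpha-\nu\rangle\le0$ for all $\alpha\in A$, whence $\sum_\alpha|a_\alpha|e^{\langle x_1+v,\alpha\rangle}\le e^{\langle v,\nu\rangle}S<\infty$, so $x_1+C_\nu\subset\mathcal{G}$ and in particular $C_\nu\subseteq\mathrm{rec}\,\mathcal{G}$. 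Choosing $s_0\in\mathrm{int}\,C_\nu$ gives $\langle s_0,\alpha-\nu\rangle<0$ for every $\alpha\ne\nu$, and dominated convergence against the summable majorant $|a_\alpha|e^{\langle x_1,\alpha-\nu\rangle}/|a_\nu|$ yields an $R>0$ with
\[
\sum_{\alpha\ne\nu}\frac{|a_\alpha|}{|a_\nu|}\,e^{\langle x_1,\alpha-\nu\rangle}\,e^{R\langle s_0,\alpha-\nu\rangle}<1 .
\]
Setting $x_0=x_1+Rs_0$ and using $\langle w,\alpha-\nu\rangle\le0$ for $w\in C_\nu$, this bound persists for every $x=x_0+w$, so on the whole shifted cone $x_0+C_\nu$ one has $Q=a_\nu z^{\nu}(1+g)$ with $\sup_{\Log^{-1}(x)}|g|<1$. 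Thus $Q$ is zero-free there, $x_0+C_\nu\subset\mathcal{G}\setminus\mathcal{A}_V$ (in particular the complement is non-empty), and since $\log(1+g)$ is single-valued its logarithmic derivative integrates to zero, giving $\nu(x)\equiv\nu$ on this cone. Letting $E_\nu$ be the component containing $x_0+C_\nu$, the components are pairwise distinct and $\mathrm{rec}\,E_\nu\supseteq C_\nu$.

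The remaining and, I expect, hardest point is the reverse inclusion $\mathrm{rec}\,E_\nu\subseteq C_\nu$, since it must forbid recession directions from escaping into a neighbouring normal cone. The key observation is that if $s\in\mathrm{rec}\,E_\nu$ lies in $\mathrm{int}\,C_\mu$ for some vertex $\mu$, then the ray $x_0+ts$ stays in $E_\nu$ (order constantly $\nu$) while the domination argument applied to $\mu$ forces the order to equal $\mu$ for $t$ large, so $\mu=\nu$; hence $\mathrm{rec}\,E_\nu$ meets no $\mathrm{int}\,C_\mu$ with $\mu\ne\nu$. Since $\mathrm{rec}\,E_\nu\subseteq\mathrm{rec}\,\mathcal{G}$, and the latter is tiled by the full-dimensional normal cones $\{C_\mu\}$ of the vertices of $\mathcal{N}$ (with pairwise disjoint interiors), any $s\in\mathrm{rec}\,E_\nu$ outside $C_\nu$ could be joined to an interior point of $C_\nu\subseteq\mathrm{rec}\,E_\nu$ by a generic segment inside the convex cone $\mathrm{rec}\,E_\nu$; that segment would cross a facet of $C_\nu$ transversally into some $\mathrm{int}\,C_\mu$ with $\mu\ne\nu$, a contradiction. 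This forces $\mathrm{rec}\,E_\nu\subseteq C_\nu$ and hence equality. The only delicate technical points are the integrality and continuity of the order map (standard winding-number facts) and the uniform domination over the possibly infinite support $A$, which the monotonicity $\langle w,\alpha-\nu\rangle\le0$ on $C_\nu$ reduces to the single dominated-convergence estimate above.
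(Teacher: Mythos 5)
Most of your argument is sound and runs parallel to the paper's own proof. Your vertex--monomial domination (writing $Q=a_\nu z^\nu(1+g)$ with $\sup|g|<1$ on $\Log^{-1}(x_0+C_\nu)$) is exactly how the paper places a translate of $C_\nu$ inside the complement; your direct estimate showing $x_1+C_\nu\subset\mathcal{G}$ replaces the paper's appeal to the multidimensional Abel lemma of \cite{PST}; and your winding-number order map is computationally the same device as the paper's de Rham pairing $\int_{\Gamma_\nu}\omega^\mu=\delta_{\nu\mu}$ (both rest on the same geometric-series expansion of $1/Q$), so non-emptiness, pairwise distinctness of the $E_\nu$, and the inclusion $C_\nu\subseteq\mathrm{rec}\,E_\nu$ are all correctly established. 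Your observation that $\mathrm{rec}\,E_\nu$ cannot meet $\mathrm{int}\,C_\mu$ for $\mu\neq\nu$ is also correct, and it is the right mechanism.

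The gap is in the very last step, and it is exactly where you predicted the difficulty would be. You reduce the reverse inclusion to the claim that $\mathrm{rec}\,\mathcal{G}$ is tiled by the vertex normal cones $\{C_\mu\}$. That claim is false precisely in the transcendental case this theorem was created for. The union of all normal cones of $\mathcal{N}$ is contained in the polar of $\mathrm{rec}\,\mathcal{N}$ (the paper's cone $-C^{\vee}(\mathcal{N})$), whereas $\mathrm{rec}\,\mathcal{G}$ can be strictly larger: for $n=1$ and $Q(z)=(1-z)e^{z}$ one has $A\subset\mathbb{N}$, $\mathcal{N}=[0,+\infty)$ line-free, $\mathcal{G}=\mathbb{R}$, so $\mathrm{rec}\,\mathcal{G}=\mathbb{R}$, while the only vertex cone is $C_0=(-\infty,0]$; note also that the other complement component $(0,+\infty)$ of this amoeba has recession cone $[0,+\infty)$, lying entirely outside the fan, so recession directions of components genuinely can escape the union of the $C_\mu$ --- just not for vertex components, which is what must be proved. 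Concretely, whenever $\mathcal{N}$ has an unbounded edge at $\nu$ (always the case for an infinite spectrum, e.g.\ $\mathcal{N}=\mathbb{R}^n_{\geq 0}$ for the spectrum $\mathbb{N}^n$, where $C_{\bar{0}}$ is the \emph{only} full-dimensional cone), the dual facet of $C_\nu$ has no cone $C_\mu$ on its other side, and your transversal segment crosses it into territory covered by no cone at all: no contradiction results. What is missing is a proof that $\mathrm{rec}\,E_\nu\subseteq -C^{\vee}(\mathcal{N})$, i.e.\ that the amoeba itself (not the geometry of $\mathcal{G}$) blocks the directions along which the support function of $\mathcal{N}$ is unbounded; dominated-convergence along rays cannot supply this, because in such directions no monomial of $Q$ ever dominates. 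The paper's proof works inside the fan dual to $C(\mathcal{N})$, i.e.\ with the tiling of $-C^{\vee}(\mathcal{N})$ rather than of $\mathrm{rec}\,\mathcal{G}$; your substitution of $\mathrm{rec}\,\mathcal{G}$ for that cone is the step that fails and needs a separate argument.
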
 

\begin{proof}
Assumption of the theorem implies that the set of the vertices $\textup{vert}(\mathcal{N})$ is nonempty.
The argument is similar to the one for the algebraic case (when $Q$ is a polynomial and $\mathcal{G}=\mathbb{R}^n$) that is
given in \cite{FPT} and \cite{MY}. First one shows that for each vertex  $\nu\in\mathcal{N}$ a suitable translate of the cone $C_\nu$ is disjoint from  $\mathcal{A}_V$, 
so that one can associate with the vertex $\nu$ the component $E_\nu$ of the complement $\mathcal{G}\setminus \mathcal{A}_V$, which contains this translated cone. Here the only difference is that, when $\mathcal{G}\neq \mathbb{R}^n$, one must show that the translated cones are contained
in $\mathcal{G}$. This follows from the fact that the dual cones $C_\nu$ at the vertices of $\mathcal{N}$ all lie in the cone $-C^{\vee}(\mathcal{N})$, 
where $C^{\vee}(\mathcal{N})$ is the dual cone of the recession cone $C(\mathcal{N})$ of $\mathcal{N}$, together with the multidimensional Abel 
lemma \cite{PST}, which says that the cone $-C^{\vee}(\mathcal{N})$ lies in the recession cone of the domain $\mathcal{G}$.

Next, just as in \cite{MY}, one associates to the collection of $n$-cycles $\Gamma_\nu=\Log^{-1} (x_\nu)$, with the point $x_\nu$ taken in the
translate of $C_\nu$, a collection of de Rham dual $n$-forms $\omega^\mu$ which are meromorphic in $G\cap \mathbb{T}^n$ with poles on $V$. Namely, we choose
$$
	\omega^\mu=\frac{1}{(2\pi\imath)^n}\cdot\frac{a_\mu z^\mu}{Q(z)}\cdot \frac{dz_1}{z_1}\wedge\ldots \wedge \frac{d z_n}{z_n}, 
	\ \mu\in \textup{vert}(\mathcal{N})
$$
(recall, that $a_\mu$ is the Laurent coefficient of $Q$).  For points $z\in \Gamma_\nu$ we have
$|a_\nu z^\nu|>|g_\nu(z)|$, where $g_\nu(z)=Q(z)-a_\nu z^\nu$. Hence, the meromorphic function  $1/Q(z)$ can be
developed into a geometric progression 
$$
	\frac{1}{Q(z)}=\sum_{k=0}^\infty (-1)^k \frac{g^k_{\nu}(z)}{(a_\nu z^\nu)^{k+1}},
$$
uniformly converging on $\Gamma_\nu$, and one has
$$
	\int_{\Gamma_\nu} \omega^{\mu}=\sum_{k=0}^\infty \frac{(-1)^k}{(2\pi\imath)^n}\int_{\Gamma_\nu}\frac{a_\mu z^\mu}{a_\nu z^\nu}\cdot
	\left(\frac{g_\nu(z)}{a_\nu z^\nu}\right)^k \cdot\frac{dz_1}{z_1}\wedge\ldots \wedge \frac{d z_n}{z_n}.
$$
The leading term of $Q(z)$ with respect to the orders, defined by weight vectors from $C_\nu$,
is equal to $a_\nu z^\nu$.  This yields that all the integrals in the sum vanish for  $\nu\neq \mu$;
and if $\nu=\mu$, the  only one nonzero summand occurs for $k=0$ and equals $1.$
Therefore,
$$
	\int_{\Gamma_\nu} \omega^{\mu}=\delta_{\nu\mu},
$$
and by the de Rham duality~\cite{Le}  the cycles $\Gamma_\nu$, $\nu \in \textup{vert}\, \mathcal{N}$
are  linearly independent in the homology group $H_n((G\cap \mathbb{T}^n) \setminus V)$.
The cycles $\textup{Log}^{-1}(x)$ for $x$  from the same connected component of $\mathcal{G}\setminus \mathcal{A}_V$
are homologically equivalent, this implies that the connected components $\{E_\nu \}_{\nu\in \textup{vert}(\mathcal{N})}$
are pairwise distinct. Since the $n$-dimensional cones of a fan dual to $C(\mathcal{N})$
coincide with the cones $C_\nu$ and $C_\nu \subset E_\nu$, one has that $C_\nu$  coincides  with 
the recession cone for $E_\nu$.\qed
\end{proof}

\section{ The amoeba contour and the logarithmic Gauss mapping}
\label{sec:2}

In Sect.~\ref{sec:1} we saw that certain information about the position of the amoeba of a complex hypersurface is given by the combinatorics of the integer points of the Newton polytope (or polyhedron) of the polynomial (or series) that defines this hypersurface. Here we shall describe an object associated with the amoeba, that reflects the differential geometry of the hypersurface. The study of this object can be carried out with more analytic methods.

The \emph{contour} $\Contour{V}$ of the amoeba $\Amoeba{V}$ is defined (see \cite{PT}) as the set of critical values of the mapping
$\Log\colon V\to \Real^n$, that is, the mapping $\Log$ restricted to the hypersurface $V$. We observe that the boundary $\partial \Amoeba{V}$ is included in the contour $\Contour{V}$, but the inverse inclusion does not hold in general.
Note, a contour of an amoeba for Harnack's curve coincides with a boundary of the amoeba~\cite{Mikhalkin,LPT} (there
is the amoeba of the Harnack curve on Fig.~\ref{fig:1}). Herewith, a real section $V\cap \mathbb{R}^2$ of  Harnack's curve
consists of fold critical points of the projection $\textup{Log}: V\mapsto \mathcal{A}_V$. 
Fig.~\ref{fig:2} depicts the amoebas of the complex curves, which contours do not coincide with the boundaries, besides that
the points $a,b,c$ and $d$ are the images of Whitney pleats.

\begin{figure}
\centering
\resizebox{0.75\textwidth}{!}{%
  \includegraphics{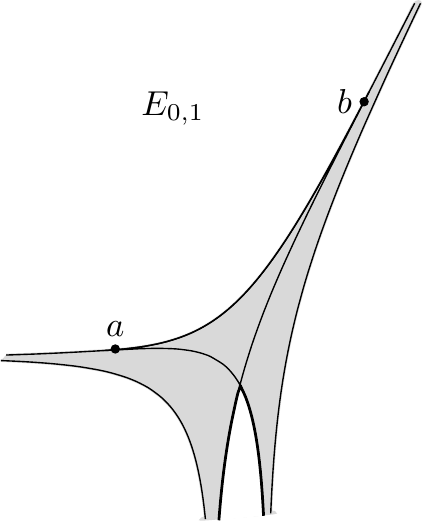}\hskip 1cm
  \includegraphics{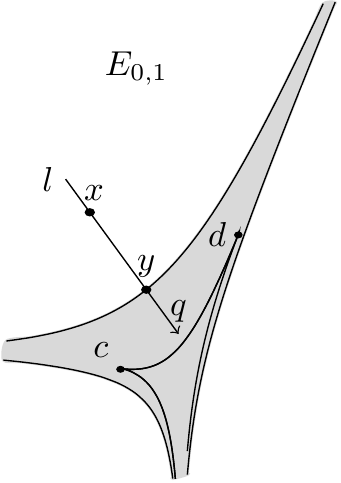}
}
\caption{The amoebas and their contours  for the graphs of polynomials
$1-2z-3z^2$ (left) and $1+z+z^2+z^3$ (right, 
the normal line $l$ to $\partial E_{0,1}$ with a directional vector $q$
and points $x$, $y$ illustrate the proof of the Theorem~2).}
\label{fig:2}       
\end{figure}

We recall (see \cite{Mikhalkin,Kapranov}) that the \emph{logarithmic Gauss mapping} of a complex hypersurface $V\subset \mathbb{T}^n$ is defined to be the mapping
$$
\gamma=\gamma_V\colon \textup{reg}\,V\to \mathbb{CP}_{n-1}\,,
$$
which to each regular point $z\in \textup{reg}\, V$ associates the complex normal direction to the image
$\log(V)$ at the point $\log(z)$. (Here $\log$, in contrast to $\Log$, denotes the full complex coordinatewise logarithm.) 
The image $\gamma(z)$ does not depend on the choice of branch of $\log$ and it is given in coordinates by the 
explicit formula \cite{Mikhalkin}:
$$
\gamma(z)= \left(z_1 Q'_{z_1}(z):\ldots : z_n Q'_{z_n}(z)\right)\,.
$$

The connection between the contour $\Contour{V}$ and the logarithmic Gauss mapping is given as follows.
\begin{proposition}[\cite{Mikhalkin}]
\label{prop:1}
The contour $\Contour{V}$ is expressed by the identity
$$\Contour{V}=\Log\left(\gamma^{-1}(\mathbb{RP}_{n-1})\right)\,.$$
In other words, the mapping $\gamma$ sends the critical points $z$ of $\left.\Log\right|_V$ to real
direction $\gamma(z)$ which is orthogonal to the contour $\Contour{V}$ at $\Log\, z$.
\end{proposition}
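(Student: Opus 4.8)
The plan is to pass to the universal logarithmic cover and reduce the whole question to a single linear-algebra computation at each regular point. First I would introduce coordinates $w=(w_1,\dots,w_n)=\log z$ on a simply connected piece of $\textup{reg}\,V$, so that the projection $\Log$ becomes the real-part map $w\mapsto\mathrm{Re}\,w$ and the hypersurface is cut out by $f(w)=Q(e^{w_1},\dots,e^{w_n})$. Since $\partial f/\partial w_j=z_jQ'_{z_j}(z)$, the complex gradient $N=(z_1Q'_{z_1},\dots,z_nQ'_{z_n})$ is precisely the vector whose projective class is $\gamma(z)$, and the complex tangent space to $\log(V)$ is the hyperplane $T=\{v\in\mathbb{C}^n:\sum_j N_jv_j=0\}$ for the bilinear pairing. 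In these coordinates a point $z$ is a critical point of $\left.\Log\right|_V$ exactly when the differential, which sends a complex tangent vector $v\in T$ to $\mathrm{Re}\,v\in\mathbb{R}^n$, fails to be surjective, that is, when the real subspace $\mathrm{Re}(T)\subset\mathbb{R}^n$ is proper.

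The core step is to identify the real orthogonal complement $\mathrm{Re}(T)^{\perp}$. A real vector $\xi$ is orthogonal to $\mathrm{Re}(T)$ iff $\mathrm{Re}\bigl(\sum_j\xi_jv_j\bigr)=0$ for every $v\in T$, i.e. iff the real-linear functional $v\mapsto\mathrm{Re}(\xi\cdot v)$ vanishes on the complex hyperplane $T$. I would use the elementary fact that every real-linear functional on $\mathbb{C}^n$ is uniquely of the form $v\mapsto\mathrm{Re}(c\cdot v)$ with $c\in\mathbb{C}^n$, and that those vanishing on $T$ form the two-real-dimensional family $\{v\mapsto\mathrm{Re}(\lambda N\cdot v):\lambda\in\mathbb{C}\}$ --- indeed both $N$ and $iN$ annihilate $T$, and a dimension count ($T$ has real codimension $2$) shows nothing else does. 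By uniqueness of $c$ this forces $\xi=\lambda N$ for some $\lambda\in\mathbb{C}$. Hence $z$ is critical, that is $\mathrm{Re}(T)^{\perp}\neq\{0\}$, precisely when there is a nonzero \emph{real} vector proportional over $\mathbb{C}$ to $N$; and this is exactly the condition that the projective point $[N]=\gamma(z)$ lie in $\mathbb{RP}_{n-1}$. This identifies the critical locus of $\left.\Log\right|_V$ with $\gamma^{-1}(\mathbb{RP}_{n-1})$, and taking $\Log$-images gives the asserted identity for $\Contour{V}$.

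For the orthogonality statement I would note that at a generic (fold-type) critical point the corank of $d\left.\Log\right|_V$ equals one, so $\mathrm{Re}(T)$ is an $(n-1)$-dimensional subspace tangent to the contour at $\Log\,z$; its normal line is $\mathrm{Re}(T)^{\perp}$, spanned by the real representative $\lambda N$ of $\gamma(z)$. Thus the real direction $\gamma(z)$ is orthogonal to $\Contour{V}$ at $\Log\,z$, as claimed.

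I expect the main obstacle to be the bookkeeping in the central linear-algebra step: one must keep the bilinear pairing $\sum_jN_jv_j$ (rather than the Hermitian one) straight, track real versus complex dimensions of $T$ and of the relevant functional spaces, and verify that $\{v\mapsto\mathrm{Re}(\lambda N\cdot v)\}$ really exhausts the annihilator of $T$ and is not a proper subspace of it. A secondary point is the passage between the local logarithmic chart and the global picture: one has to confirm that the branch of $\log$ is immaterial, as the statement already records, and that the generic dimension count indeed makes $\Contour{V}$ a hypersurface, so that speaking of its normal direction is meaningful.
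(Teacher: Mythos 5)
The paper states this proposition as a quoted result from the cited source and gives no proof of its own, so there is nothing internal to compare against; your argument is correct and is essentially the standard one behind the citation: in logarithmic coordinates the image of $d(\left.\Log\right|_V)$ is $\mathrm{Re}(T)$ for the complex tangent hyperplane $T=\ker(v\mapsto N\cdot v)$, and the annihilator computation identifying $\mathrm{Re}(T)^{\perp}$ with $\mathbb{R}^n\cap\mathbb{C}\cdot N$ is exactly the right linear algebra. One minor remark: your computation already shows that $\mathrm{Re}(T)$ has dimension exactly $n-1$ at every critical point, so the corank of $d(\left.\Log\right|_V)$ is automatically one there; the genericity (fold-type) hypothesis in your last paragraph is needed only to guarantee that the contour is a smooth hypersurface near $\Log\,z$ with tangent space $\mathrm{Re}(T)$, not for the rank statement itself.
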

The inverse
$z=\gamma^{-1}(q)$ of the logarithmic Gauss mapping is given by the solutions to the system of equations
\begin{equation}\label{eq:IG} 
\left\{
\begin{array}{l}
Q(z)=0\,,\\
q_n z_j Q'_{z_j}-q_j z_n Q'_{z_n}=0\,,\quad\ j=1,\ldots, n-1\,.\\
\end{array}
\right.
\end{equation}

For a fixed vector $q\in \mathbb{Z}^n_{*}=\mathbb{Z}^n\setminus\{0\}$ the solutions to the system (\ref{eq:IG}) consist of
the points $z\in V$ at which the Jacobian of the mapping $(Q(z),z^q)$ has rank $\leq 1$, which means that the following
statement holds.

\begin{proposition}
\label{prop:ku}
A point $w\in\textup{reg}\,V$ is a critical point for the monomial function $\left.z^q\right|_V$ if and only if the logarithmic Gauss
mapping takes the value $q$ at $w$, that is,  $\gamma(w)=q$.
\end{proposition}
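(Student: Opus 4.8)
The plan is to reduce the equivalence to the rank condition recorded just above the proposition and then read it off from a short computation of two differentials in logarithmic coordinates. First I would make precise what it means for $w\in\textup{reg}\,V$ to be a critical point of the restriction $\left.z^q\right|_V$. Since $w$ is a regular point, $dQ(w)\neq 0$, so the tangent space $T_w V=\ker dQ(w)$ is a hyperplane of dimension $n-1$. The function $\left.z^q\right|_V$ is the restriction of the ambient monomial, hence its differential at $w$ is $d(z^q)(w)$ restricted to $T_wV$; this vanishes exactly when $d(z^q)(w)$ annihilates $\ker dQ(w)$. A linear functional killing the kernel of a nonzero functional must be a scalar multiple of it, so criticality is equivalent to the $\mathbb{C}$-linear dependence of $d(z^q)(w)$ and $dQ(w)$, i.e. to the Jacobian of $(Q,z^q)\colon\Torus^n\to\mathbb{C}^2$ having rank $\leq 1$ at $w$, as already remarked.

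Second, I would compute both forms in the basis of logarithmic differentials $\{dz_j/z_j\}_{j=1}^n$, which is available everywhere on $\Torus^n$ because no coordinate vanishes. One obtains
\[
d(z^q)=z^q\sum_{j=1}^n q_j\,\frac{dz_j}{z_j},\qquad dQ=\sum_{j=1}^n z_j Q'_{z_j}\,\frac{dz_j}{z_j}.
\]
The dependence from the first step then reads $q_j\,z^q=\lambda\,z_j Q'_{z_j}(w)$ for all $j$ and some $\lambda\in\mathbb{C}$. Since $z^q\neq 0$ on $\Torus^n$, the vectors $(q_1,\dots,q_n)$ and $(w_1 Q'_{z_1}(w),\dots,w_n Q'_{z_n}(w))$ are proportional, and by the explicit formula for $\gamma$ this is exactly the equality $\gamma(w)=q$ in $\mathbb{CP}_{n-1}$. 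Reversing the implications yields the converse, so both directions come out of the same computation.

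The only point that needs a little care — and which I regard as the main (minor) obstacle — is the bookkeeping of the projective identification: "$\gamma(w)=q$" is an equality of points in $\mathbb{CP}_{n-1}$, so one must translate the affine Lagrange multiplier $\lambda$ (rescaled by the nonzero factor $z^q$) into projective proportionality, and use the nonvanishing of $z^q$ on the torus to pass freely between the affine and projective formulations. Once this identification is set up, nothing further is required.
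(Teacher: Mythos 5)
Your argument is correct and follows exactly the route the paper intends: the paper simply remarks that criticality of $\left.z^q\right|_V$ at a regular point is the rank-$\leq 1$ condition on the Jacobian of $(Q,z^q)$, and your computation of $d(z^q)$ and $dQ$ in the basis $\{dz_j/z_j\}$ turns that into the proportionality of $q$ and $\left(w_1Q'_{z_1}(w),\dots,w_nQ'_{z_n}(w)\right)$, i.e.\ $\gamma(w)=q$. The only detail worth making explicit is that $q\neq 0$ and $dQ(w)\neq 0$ rule out the degenerate case $\lambda=0$, so the projective equality is genuinely well posed; with that noted, the proof is complete.
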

	
Notice that if $V$ is the graph of a function of $n$ variables $z=(z_1,\ldots, z_n)$, so that it is the zero set of the function $Q(z,w)=w-f(z)$, 
then the logarithmic Gauss mapping is given in the affine coordinates $s_j=q_j/q_{n+1},\ j=1,\ldots, n$ of
$\mathbb{CP}_n$ by the formula
\begin{equation}
\label{eq:n2}
z_j \frac{f'_{z_j}}{f}=-s_j\,,\quad j=1,\ldots, n\,.
\end{equation}
\section{Asymptotics of Laurent coefficients}
\label{sec:3}

Let $E$ be a connected component of the amoeba complement with smooth boundary $\partial E$. The cone generated by the outward normals to 
$\partial E$ will be called the \emph{component cone} of $E$ and denoted by $K_E$. It is clear that $K_E$ is a cone over the image of $\partial E$
under the ordinary Gauss mapping $\sigma:\partial E\to S^{n-1}$.

\begin{definition}
\label{def:simple}
The smooth boundary $\partial E$ of a connected component $E$ is said to be \emph{simple} if for each $x\in\partial E$ the real torus $\Log^{-1}(x)$
intersects $V$ in a unique point, and if moreover the logarithmic Gauss mapping $\gamma$ of the hypersurface $V$ is locally invertible at this
intersection point.
\end{definition}

The following result is a consequence of Lemmas 1 and 2 in the paper \cite{PoTs}, which exhibits a class of simple boundaries in the case where
$V=\Gamma_f$ is the graph over the convergence domain of a power series $f(z)=\sum\limits_{\alpha\in A\subset \mathbb{N}^n} \omega_\alpha z^\alpha$.

\begin{proposition}
\label{prop:3}
If $\bar{0}\in A$, the coefficients $\omega_\alpha$ are positive, and the set $A$ generates the lattice $\mathbb{Z}^n$ as a group,
then the boundary of the component $E_{\bar{0},1}$ of the complement of the amoeba $A_{\Gamma_f}$  is simple.
\end{proposition}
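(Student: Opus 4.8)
The plan is to verify the two requirements of Definition~\ref{def:simple} directly, exploiting that $V=\Gamma_f$ is a graph and that $f$ has positive coefficients. I write points of $\Real^{n+1}$ as $(\xi,t)$ with $\xi\in\Real^n$, and denote by $e^\xi=(e^{\xi_1},\dots,e^{\xi_n})$ the positive real point of the torus $\Log^{-1}(\xi)$. Since $Q(z,w)=w-f(z)$ has $Q'_w\equiv1$, the surface $V$ is everywhere regular, so it only remains to control the fibre $\Log^{-1}(x)\cap V$ and the logarithmic Gauss mapping $\gamma$ along $\partial E_{\bar0,1}$.

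First I would locate $E_{\bar0,1}$ and settle the first condition. Because $\bar0\in A$ and all $\omega_\alpha>0$, for every $z$ with $\Log z=\xi$ the triangle inequality gives
$$
|f(z)|=\Bigl|\sum_{\alpha\in A}\omega_\alpha z^\alpha\Bigr|\le\sum_{\alpha\in A}\omega_\alpha e^{\langle\xi,\alpha\rangle}=f(e^\xi),
$$
with equality only when all terms $\omega_\alpha z^\alpha$ share a common argument. As the constant term $\omega_{\bar0}>0$ is real, equality forces $z^\alpha>0$ for all $\alpha\in A$; writing $z_j=e^{\xi_j+\imath\theta_j}$ this means $\langle\alpha,\theta\rangle\in2\pi\mathbb{Z}$ for every $\alpha\in A$, and since $A$ generates $\mathbb{Z}^n$ one gets $\theta=0$, i.e. $z=e^\xi$. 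Thus $|f|$ attains its maximum on each torus $\Log^{-1}(\xi)$ at the single point $e^\xi$, the region $\{(\xi,t):t>\log f(e^\xi)\}$ lies in the complement of $\Amoeba{V}$ and is exactly $E_{\bar0,1}$, with $\partial E_{\bar0,1}=\{(\xi,\log f(e^\xi))\}$. Consequently, for $x=(\xi,\log f(e^\xi))\in\partial E_{\bar0,1}$ any $(z,w)\in\Log^{-1}(x)\cap V$ satisfies $w=f(z)$ and $|f(z)|=f(e^\xi)$, so by the uniqueness just shown the fibre reduces to the single point $(e^\xi,f(e^\xi))$.

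For the second condition I would use the graph form~(\ref{eq:n2}) of $\gamma$. In the coordinate $z\in\mathbb{T}^n$ on $V$ it is the map $z\mapsto\bigl(-z_jf'_{z_j}/f\bigr)_{j=1}^{n}$; passing to logarithmic coordinates $\zeta_j=\log z_j$, so that $z_j\partial_{z_j}=\partial_{\zeta_j}$, its Jacobian equals $-\mathrm{Hess}_\zeta\bigl(\log f(e^\zeta)\bigr)$. Hence local invertibility of $\gamma$ at $e^\xi$ is nondegeneracy of the Hessian of the real function $\varphi(\xi)=\log\sum_\alpha\omega_\alpha e^{\langle\xi,\alpha\rangle}$. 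Setting $p_\alpha(\xi)=\omega_\alpha e^{\langle\xi,\alpha\rangle}/f(e^\xi)>0$, a direct computation identifies this Hessian with the covariance matrix $\bigl(\sum_\alpha p_\alpha\alpha_j\alpha_k-(\sum_\alpha p_\alpha\alpha_j)(\sum_\alpha p_\alpha\alpha_k)\bigr)_{j,k}$ of the distribution $\{p_\alpha\}$ supported on $A$. Such a matrix is positive definite exactly when $A$ is not contained in a proper affine subspace; and since $\bar0\in A$ the affine hull of $A$ equals its linear span, which is all of $\Real^n$ because $A$ generates $\mathbb{Z}^n$. Therefore $\mathrm{Hess}\,\varphi$ is positive definite, $\gamma$ is locally invertible at $e^\xi$, and $\partial E_{\bar0,1}$ is simple.

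The main obstacle is the second condition, where the differential-geometric local invertibility of $\gamma$ must be traded for the convexity of $\varphi$: the decisive step is recognizing $\mathrm{Hess}\,\varphi$ as a covariance matrix and deducing its definiteness from the lattice-generating hypothesis on $A$. By comparison, the identification of $E_{\bar0,1}$ and the uniqueness of the maximizing point on the torus are elementary consequences of the positivity of the coefficients together with $A$ generating $\mathbb{Z}^n$.
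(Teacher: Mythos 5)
Your proof is correct, and it is worth noting that the paper itself does not actually prove Proposition~\ref{prop:3}: it is stated as a consequence of Lemmas~1 and~2 of \cite{PoTs}, so your argument supplies, self-contained, exactly the content the paper outsources, and its two halves correspond to the two cited lemmas. In the first half, the triangle inequality plus the equal-argument analysis (where $\omega_{\bar 0}>0$ pins the common argument to zero and the hypothesis that $A$ generates $\mathbb{Z}^n$ forces $\theta=0$) correctly yields both the identification $E_{\bar 0,1}=\{(\xi,t):t>\log f(e^\xi)\}$ and the uniqueness of the point of $\Gamma_f$ over each boundary point; this also furnishes, implicitly, the smoothness of $\partial E_{\bar 0,1}$ that Definition~\ref{def:simple} presupposes, since the boundary is the graph of the real-analytic function $\xi\mapsto\log f(e^\xi)$. (You assert rather than argue that this region is a full component and is the one labelled $(\bar 0,1)$; both follow in one line, since the relative boundary of the region lies in $\Amoeba{\Gamma_f}$ and the Laurent expansion $\sum_N f^N w^{-N}$ of order $(\bar 0,1)$ converges over it, so this is harmless.) In the second half, passing to logarithmic coordinates to identify the Jacobian of $\gamma$ with $-\mathrm{Hess}_\xi\log f(e^\xi)$, and recognizing that Hessian as the covariance matrix of the probability weights $p_\alpha=\omega_\alpha e^{\langle\xi,\alpha\rangle}/f(e^\xi)$ on $A$, is the right mechanism: positive definiteness is equivalent to $A$ not lying in a proper affine subspace, and you use both hypotheses correctly ($\bar 0\in A$ to replace the affine hull by the linear span, lattice generation to make that span all of $\Real^n$), after which the holomorphic inverse function theorem gives local invertibility of $\gamma$ at the real point $e^\xi$.
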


As it follows from an example of a polynomial $f=1-2z_1-3z_1^2$ (see Fig.~\ref{fig:2}) the condition of
coefficients $\omega_\alpha$ to be positive is essential in Proposition~\ref{prop:3}. Namely, the preimage
of the inner point of the arc $(a,b)\subset \partial E_{0,1}$ consists of two points on the graph $\Gamma_f$,
and the boundary point $a$ or $b$ have one preimage on $\Gamma_f$, but the logarithmic Gauss mapping has no inverse at $a$
and $b$.

 In view of the convexity and smoothness of $\partial E$ each point $x\in\partial E$ 
is the preimage $x=\sigma^{-1}(q)$ of a point $q\in K_E$.

We consider the expansion of the meromorphic function $F=P(z)/Q(z)$ in a Laurent series
\begin{equation}
\label{eq:L1}
F(z)=\sum_{\alpha\in \mathbb{Z}^n} c_\alpha z^\alpha\,,
\end{equation}
that converges in the preimage $\textup{Log}^{-1}(E)$ of a complement component $E$ of the amoeba of the polar hypersurface
$V=\{z:Q(z)=0\}$ of $F$. For a fixed $q\in \mathbb{Z}^n_{*}$ we define the diagonal sequence $c_{q\cdot k}=c_{(q_1,\ldots, q_n)\cdot k}$ of Laurent coefficients $c_\alpha$ from (\ref{eq:L1}).

\begin{theorem}
\label{thm:2}
Let the boundary $\partial E$ be simple. Then for each $q\in \mathbb{Z}_*^n\cap K_E$ the diagonal sequence  $\{c_{q\cdot k}\}$ has the
asymptotics
\begin{equation}
\label{eq:asymp}
c_{q\cdot k}= k^{\frac{1-n}{2}} \cdot z^{-q\cdot k}(q)\cdot \left\{C(q)+O(k^{-1})\right\}\,,
\end{equation}
as $k\to +\infty$. Here $z(q)=V\cap\textup{Log}^{-1}(\sigma^{-1}(q))$, and the constant $C(q)$ vanishes only when $P(z(q))=0$.
\end{theorem}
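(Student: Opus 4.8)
The plan is to start from the integral representation of the diagonal coefficient
$$
c_{qk}=\frac{1}{(2\pi\imath)^n}\int_{\Log^{-1}(x)}\frac{P(z)}{Q(z)}\,z^{-qk}\,\frac{dz_1}{z_1}\wedge\dots\wedge\frac{dz_n}{z_n}\,,\qquad x\in E\,,
$$
which holds for any $x$ in the component $E$ because $F=P/Q$ is holomorphic on $\Log^{-1}(E)$. Writing $z=e^{x+\imath\theta}$ converts this into an integral over the real torus $\mathbb{T}^n$ in the angular variables $\theta$, with the large parameter $k$ entering only through the factor $e^{-k\langle q,\log z\rangle}$, so that $\Phi(z)=-\langle q,\log z\rangle$ plays the role of a phase. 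Since $q\in K_E$, the linear functional $\langle q,\cdot\rangle$ attains its maximum over $\overline E$ at the boundary point $x_0=\sigma^{-1}(q)$, whose outward normal is $q$; this is the torus giving the sharp exponential rate. First I would push the contour towards $x_0$, and by simplicity of $\partial E$ (Definition~\ref{def:simple}) the limiting torus $\Log^{-1}(x_0)$ meets $V$ in the single point $z(q)$.

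Next I would perform a residue reduction. As the cycle $\Log^{-1}(x)$ is pressed against $V$ near $z(q)$, the Leray coboundary formula lets me replace the $n$-dimensional torus integral by $2\pi\imath$ times an $(n-1)$-dimensional integral over a cycle $\delta\subset V$ of the residue form. Concretely, solving $Q(z',z_n)=0$ for $z_n$ in a neighbourhood of $z(q)$ and integrating out $z_n$ by the one-variable residue theorem replaces $P/Q$ by $P/(z_n Q'_{z_n})$ and reduces the integral to
$$
\frac{1}{(2\pi\imath)^{n-1}}\int_{\delta}\frac{P(z)}{z_n\,Q'_{z_n}(z)}\,z^{-qk}\,\frac{dz_1}{z_1}\wedge\dots\wedge\frac{dz_{n-1}}{z_{n-1}}\,,
$$
whose large-$k$ behaviour is still governed by the phase $\Phi$, now restricted to $V$.

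Then I would apply the multidimensional saddle-point (steepest-descent) method to this $(n-1)$-dimensional integral. The critical points of $\Phi|_V$ are exactly the points where $\sum_j q_j\,dz_j/z_j$ is proportional to $dQ$ along $V$; by Proposition~\ref{prop:ku} these are precisely the preimages $\gamma^{-1}(q)$ of the logarithmic Gauss mapping. Simplicity of $\partial E$ guarantees that $z(q)$ is the only such point on $\delta$ and that $\gamma$ is locally invertible there, which is the statement that the Hessian of $\Phi|_V$ at $z(q)$ is nondegenerate. The saddle-point formula then yields the factor $k^{\frac{1-n}{2}}$ from the $(n-1)$-dimensional Gaussian integration, the exponential $z^{-qk}(q)=e^{k\Phi(z(q))}$ from the value of the phase at the saddle, and a leading amplitude
$$
C(q)=\frac{\kappa\,P(z(q))}{z_n(q)\,Q'_{z_n}(z(q))\,\sqrt{\det\textup{Hess}\,\Phi|_V(z(q))}}\,,
$$
where $\kappa\neq0$ is a universal constant. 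Since $z(q)\in\Torus^n$, the point $z(q)$ is a regular point of $V$ (so that $Q'_{z_n}(z(q))\neq0$ after an admissible choice of the distinguished coordinate), and the Hessian determinant is nonzero by local invertibility of $\gamma$; hence every factor except $P(z(q))$ is nonvanishing, and $C(q)=0$ if and only if $P(z(q))=0$.

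The main obstacle will be to justify the residue reduction and contour deformation rigorously: one must show that the entire asymptotic expansion is contributed, to all polynomial orders in $1/k$, by an arbitrarily small neighbourhood of $z(q)$, that the tangency of $\Log^{-1}(x_0)$ with $V$ at $z(q)$ is exactly quadratic (again equivalent to nondegeneracy of $\textup{Hess}\,\Phi|_V$), and that the orientations and the branch of the square root are fixed so that $C(q)$ is determined unambiguously. Controlling the remainder $O(k^{-1})$ uniformly then follows from the standard error estimates of the saddle-point method once the phase is known to be Morse at $z(q)$.
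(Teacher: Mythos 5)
Your proposal follows essentially the same route as the paper's proof: the Cauchy integral over $\Log^{-1}(x)$ with $x$ pushed to the boundary point $\sigma^{-1}(q)$, a residue (Leray) reduction to an $(n-1)$-dimensional oscillatory integral over a chain on $V$, identification of the unique critical point of the phase $\left<q,\log z\right>$ with the preimage $\gamma^{-1}(q)$ of the logarithmic Gauss mapping, Morse nondegeneracy deduced from simplicity of $\partial E$, and stationary phase producing the factor $k^{\frac{1-n}{2}}$ and a leading constant proportional to $P(z(q))$ divided by nonvanishing quantities. This matches the paper's argument step for step, including its deferral of the rigorous justification of the residue reduction, which you correctly flag as the main technical obstacle.
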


\begin{proof}
The idea of the proof is to choose the cycle of integration $\Log^{-1} (x)$ in the Cauchy formula
\begin{equation}\label{eq:n3}
c_{q\cdot k}=\frac{1}{(2\pi \imath)^n}\int_{\Log^{-1} (x)} \frac{F(z)}{z^{q\cdot k}}
\frac{dz_1}{z_1}\wedge\ldots\wedge\frac{dz_n}{z_n}\,,\quad x\in E\,,
\end{equation}
for those $x$ that lie near the point  $y=\Log z(q)\in \partial E$ on the line $l=\{y+qt: t\in\mathbb{R}\}$, which is tranversal to $\partial E$ (see Fig.~\ref{fig:2}). 
In view of the assumed simplicity of $\partial E$, the torus  $\Log^{-1}(y)\subset \Log^{-1}(l)$ intersects $V$ in a unique point, and $\Log^{-1}(l)$
intersects $V$ in a neighborhood of $z(q)$ along a $(n-1)$-dimensional chain $h\subset V$. By means of residue theory one shows (see \cite{Tsikh2}
for the case $n=2$) that, as a function of the parameter $k$, the integral (\ref{eq:n3}) is asymptotically equivalent, as $k\to+\infty$, to the oscillatory
integral
$$
2\pi \imath \int_h \textup{res}\,\omega \cdot e^{-k\left<q,\log z\right>},
$$
where
$$
\omega=\frac{1}{(2\pi \imath)^n} \frac{P(z)}{Q(z)} \frac{dz_1}{z_1}\wedge\ldots\wedge\frac{dz_n}{z_n}\,,
$$
and $\textup{res}\,\omega =Q\omega/dQ$ denotes the residue form for $\omega$. The phase $\varphi=\left<q, \log z\right>=\log z^q$ has the unique 
critical point $z(q)$ on $h$, at which $\textup{Re}\,\varphi$ attains its minimal value. A direct computation shows that the Hessian $\textup{Hess}\,\varphi$ vanishes
on $V$ simultaneously with the Jacobian of the logarithmic Gauss mapping. Since $\partial E$ is simple, this Jacobian is not equal to zero at $z(q)$,
and hence $z(q)$ is a Morse critical point for the phase $\varphi$. Using the principle of stationary phase we obtain  formula (\ref{eq:asymp}) with
the constant $C(q)$ being the value at the point $z(q)$ of the function $P/z_1\cdot\ldots\cdot z_n\cdot Q'_{z_n}\cdot(\textup{Hess}\,\varphi)^{1/2}$. \qed
\end{proof}

\section{The thermodynamic ensemble and its most probable distribution}
\label{sec:4}

We consider a \emph{thermodynamic ensemble} $\mathfrak{U}$, consisting of $N$ copies of some physical system. Usually (see for instance
\cite{DF1}, \cite{DF2}, \cite{Fedoruk}, \cite{Sh} or \cite{Zo}) the system is characterized by energy values from a spectrum 	
$$
0=\varepsilon_0 < \varepsilon_1 < \varepsilon_2 <\ldots,\quad \varepsilon_j\in \mathbb{Z}\,.
$$
Each choice of energies in the systems of the ensemble defines a state of the ensemble. A basic question in the study of the behavior of an ensemble 
concerns the preferred states of the ensemble, as $N\to\infty$.

We will consider a more general situation, where the system is characterized by a multidimensional quantity $\varepsilon_k=(\varepsilon^1_k,\ldots, \varepsilon^n_k)$
from a given spectrum 
$$
\mathfrak{S}=\{\varepsilon_k\}_{k=\overline{0,\infty}}\subset \mathbb{N}^n\,,
$$ 
in which we for convenience shall assume that $\varepsilon_0=\bar{0}$. Futhermore, we shall consider
spectra from the lattice $\mathbb{Z}^n$, which lies in some acute cone in $\mathbb{R}^n\supset\mathbb{Z}^n$.

We introduce the quantity
\begin{equation}
\label{eq:1}
W(a)=W(a_0,a_1,\dots)=\frac{N!}{a_0!a_1!a_2!\ldots}\,,
\end{equation}
expressing the number of different states of the ensemble, for which exactly $a_k$ of the systems is in the state with parameter value
$\varepsilon_k$. We also say that $a_k$ is the $\varepsilon_k$ energy \emph{occupation number} in the ensemble. It is clear that in
(\ref{eq:1}) one should have
\begin{equation}\label{eq:2}
\sum_k a_k=N\,,
\end{equation}
\begin{equation}
\label{eq:3}
\sum_k a_k \varepsilon_k=\mathcal{E}\,,
\end{equation}
where $\mathcal{E}=(\mathcal{E}_1,\dots, \mathcal{E}_n)$ is the energy of the ensemble and the summation is over the index $k$ that
enumerates the elements $\varepsilon_k$ of the spectrum. The collection of numbers $a=(a_k)$ is said to be \emph{admissible} if it
satisfies conditions (\ref{eq:2}) and (\ref{eq:3}).

By definition, the \emph{most probable distributions} of energies among the systems of the ensemble (for $N\gg 1$) correspond to those $a$
that occur most frequently, that is, those that realize the maximum
$$
\max_a W(a)
$$ 
among all admisible collections $a$.

When considering the problem of describing the most probable energy distributions one makes the assumption that the vector $\mathcal{E}/N=u$ is
kept constant, that is, the average energy  $u=(u_1,\ldots, u_n)$ of the ensemble systems is fixed. Under this condition,  vector relation (\ref{eq:3})
written out coordinate-wise gives $n$ relations among the independent variables $a_k$. Just as in the case of a scalar spectrum ($n=1$, see for instance
\cite{Sh}), following the approach of Boltzmann, one uses the Langrange multiplier method to find the distributions that maximize $W(a)$, which we write now $W_u(a)$
(see \cite{PoTs} for details). The Lagrange multipliers $\mu_j$, that correspond to the coordinate-wise connections of  vector relation (\ref{eq:3}),
provide an important language for the solution of the assigned problem. More precisely, by introducing the \emph{partition function} as the series
$$
Z(\mu)=Z(\mu_1,\ldots, \mu_n)=\sum_{k} e^{-\left<\mu, \varepsilon_k\right>},
$$
we obtain the \emph{fundamental thermodynamic relations}:
$$
-\nabla_\mu \,\log Z=u\,,\quad a_k=N\frac{e^{-\left<\mu, \varepsilon_k\right>}}{Z},
$$
where $\nabla_\mu$ is the gradient with respect to the variables $\mu$.

In order to apply methods from analytic function theory and the method of stationary phase, it is more convenient for us to consider other (complex)
coordinates $z_j=e^{-\mu_j}$, $j=1,\ldots,n$. In these coordinates the partition function has the form
\begin{equation}\label{eq:5}
Z(z)=\sum_k z^{\varepsilon_k}=\sum_{\alpha\in \mathfrak{S}} z_1^{\alpha_1}\cdot \ldots \cdot z_n^{\alpha_n}.
\end{equation}
Analogously, the fundamental thermodynamic relations assume the form
\begin{equation}
\label{eq:Uj}
z_j\frac{Z'_{z_j}(z)}{Z(z)}=u_j\,,\quad j=1,\dots, n\,,
\end{equation}
\begin{equation}\label{eq:ak}
a_k=N\frac{z^{\varepsilon_k}}{Z(z)}\,.
\end{equation}
Let us give an interpretation  of these relations by the following
\begin{state}
\label{state:1}
\textit{For $N\gg 1$ the occupation values $a_k=a_k(u)$, computed from the formula (\ref{eq:ak}) in the solutions $z=z(u)$ of the system of equations (\ref{eq:Uj}), are the coordinates of the critical points for the function $W_u(a)$; in particular, the most probable 
distributions $a=(a_k)$ may be computed by means of the indicated formula for suitable solutions $z(u)$.}
\end{state}

The comparison of formulas (\ref{eq:Uj}) and (\ref{eq:n2}) shows that the solutions $z(u)$ to system~(\ref{eq:Uj})
is nothing more, than the inverse image $\gamma^{-1}(-u)$ of the logarithmic Gauss mapping $\gamma: \Gamma_Z\to \mathbb{CP}_n$
of the graph $\Gamma_f$ of the partition function $Z(z)$. However, the list of links between the mathematical notions
introduced in the first and the second sections and the fundamental thermodynamic relations goes beyond this shallow observation.
Another great of importance link can be figured out by the computation of critical values of the function $W_u(a)$.

Since the logarithm is a smooth function, the critical points of $W(a)$ and $\log W(a)$ coincide. The latter function can be
written for large $N$ with a help of Stirling's asymptotic formula in the form
$$
	\log W(a)= N\left(\log N - 1\right) -\sum_k a_k\left( \log a_k -1\right).
$$

The critical values of this function (under restriction $\mathcal{E}/N=u$) are
\begin{equation}
\label{eq:logWu}
	\log W_u=\log \left[ z(u)^{\mathcal{-E}} Z(z(u))^N\right]=N\left( \log Z(z(u)) - \left < u, \log z(u) \right > \right).
\end{equation}
It is easy to check this equality  by substitution in the previous expression for $\log W(a)$ 
of values (\ref{eq:ak}) for $a_k$ evaluated at the solutions $z=z(u)$ to the system~(\ref{eq:Uj}), using  relations
(\ref{eq:2}) and (\ref{eq:3}).

We are interested in the critical values $\log W_u$ only for real $u$, i.e. $u\in\mathbb{R}^n$.
The portion of a critical value attributed  to one system of an ensemble, i.e. the value
	$$
		S_u=: \frac{1}{N} \log W_u =\log Z(z(u))- \left< u, \log z(u) \right>
	$$
plays a role of  \textit{entropy}. Since in the logarithmic scale $\log z=-\mu$ one has
$$
	u=-\nabla_\mu \log Z =\nabla_{\log z} \log Z,
$$	  
 the entropy $S_u$ considered as a function in variables $u$ 
is the Legendre transform of the logarithm of a partition function in the logarithmic scale.

Thus, based on Proposition~\ref{prop:1} we get the following

\begin{state}
\label{state:2}
\textit{The liftings of the solutions $z(u)$ of the system (\ref{eq:Uj}), for 
$u\in\mathbb{R}^n \subset \mathbb{RP}_n$, to the graph $\Gamma_Z$ of the partition function coincide with inverse image $\gamma^{-1}(-u)$ of the Gauss logarithmic map
$\gamma: \Gamma_Z\to \mathbb{CP}_n$. On the amoeba $\mathcal{A}_{\Gamma_Z}$ of the graph these solutions parametrize the contour of the amoeba.  The values $S_u$ of the entropy coincide with the critical values of the linear function
$$
	l_u(x)=x_{n+1}-u_1 x_1 -\ldots -u_n x_n,
$$
restricted to the boundary $\partial E_{\bar{0},1}$ of the connected component $E_{\bar{0},1}$
of the complement $\mathbb{R}^{n+1}\setminus \mathcal{A}_{\Gamma_Z}$.}
\end{state}
	
For certain spectra $\mathfrak{S}$ the partition function $Z(z)$ admits an analytic continuation outside the domain of convergence of its series representation
(\ref{eq:5}) with new ``twin spectra'' $\mathfrak{S}'\subset \mathbb{Z}^n$ appearing.  Let us consider two examples.

\begin{example}
\label{exm:1}
The partition function $Z$ for the  spectrum
$\mathfrak{S}=\{0,2,3,4,\ldots\}$, $n=1$, is equal to the rational function $1+z^2/(1-z)$, which outside the unit disk $|z|<1$ has the development
$$
Z=-(z+\frac{1}{z}+\frac{1}{z^2}+\ldots)=-\sum_{\alpha\in\mathfrak{S}'}z^\alpha,
$$
where $\mathfrak{S}'=\{1,-1,-2,\ldots\}$. We can consider the thermodynamic relations (\ref{eq:Uj}),(\ref{eq:ak}) also in the complement  $\{|z|>1\}$ of the unit disk. The corresponding pieces of the amoeba of the graph of this rational function are depicted on Fig.~\ref{fig:3} in the middle.

\begin{figure}
\centering
\resizebox{0.75\textwidth}{!}{%
  \includegraphics{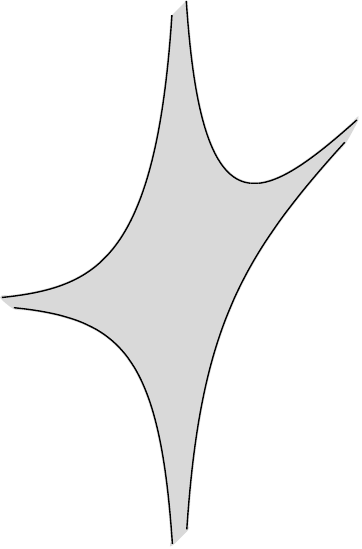}
  \includegraphics{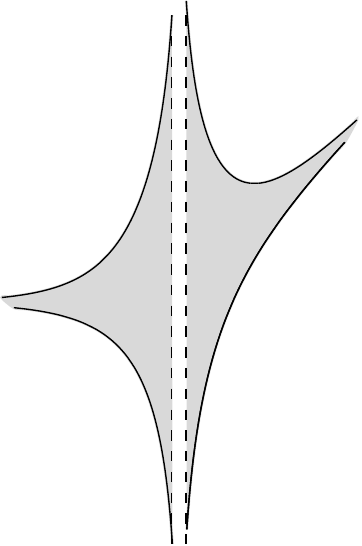}
  \includegraphics{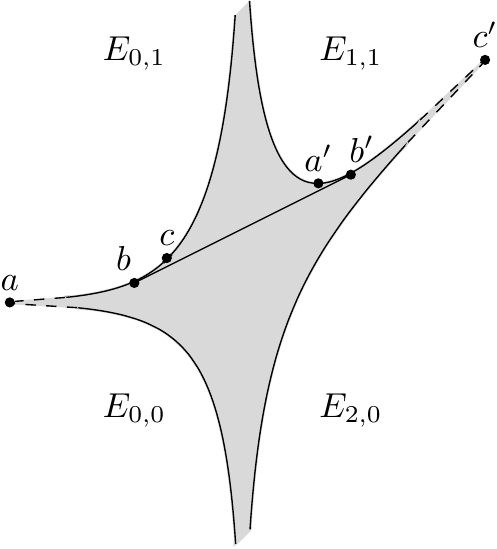}
}
\caption{Amoebas  for the graph of the partition function
$1+z^2/(1-z)$: the full amoeba (on the left), its pieces over  $|z|<1$ and over $|z|>1$ (in the middle)
and with a common tangent segment $[b,b']$ to $\partial E_{0,1}$ and $\partial E_{1,1}$ (on the right).}
\label{fig:3}       
\end{figure}

On Fig.~\ref{fig:3}, the points $a$ and $c'$ depict the points at infinity, where the normal 
vector $[-u:1]$  to the contour of the amoeba at $a$ and $c'$ equals $[0:1]$ 
and $[-1:1]$  respectively. 
The boundaries $\partial E_{0,1}$ and $\partial E_{1,1}$
have a common tangent at  the points $b$ and $b'$ (a simple computation shows that the normal vector $[-u_0:1]$ corresponds 
to the value  $u_0=1/2$). The set of the normal vectors to the arc $(a,b)\subset \partial E_{0,1}$
coincides with the set of the normal vectors to  $(a',b')\in \partial E_{1,1}$, the 
same holds for the pair of arcs $(b,c)$ and $(b',c')$. The tangents at the points of
the arc $(b,c)$  lie higher, than parallel to them tangents at the points of $(b',c')$,
the tangents at the points of $(a,b)$ and $(a', b')$. It follows
from Statement~\ref{state:2} that the maximal value of the entropy $S_u$ for $0<u<u_0$ corresponds to a
 solution $z(u)$ projected on the arc $(a',b')$, and it is from the domain $\{|z|>1\}$.

However, the combinatorial interpretation of $W(a)$ forbids us to consider the domain $\{|z|>1\}$,
because all the occupation  numbers in~(\ref{eq:ak}) for $z>1$ and some of them for $z<-1$ are negative.
Moreover, the partition function is negative at the points that project on the boundary $\partial E_{1,1}$.
\end{example}

The next example shows that in several dimensions we can overcome such  limitations.

\begin{example}
\label{exm:2}
Consider the spectrum
$$
	\mathfrak{S}=\{(0,0)\}\cup\{(2,2)+\mathbb{S}\}+\{(4,4)+\mathbb{S}\},
$$ 
where $\mathbb{S}$ is the semigroup $(2,1)\cdot\mathbb{N}+(1,2)\cdot\mathbb{N}$ (see Fig.~\ref{fig:4} on the left).

\begin{figure}
\centering
\resizebox{0.75\textwidth}{!}{%
  \includegraphics{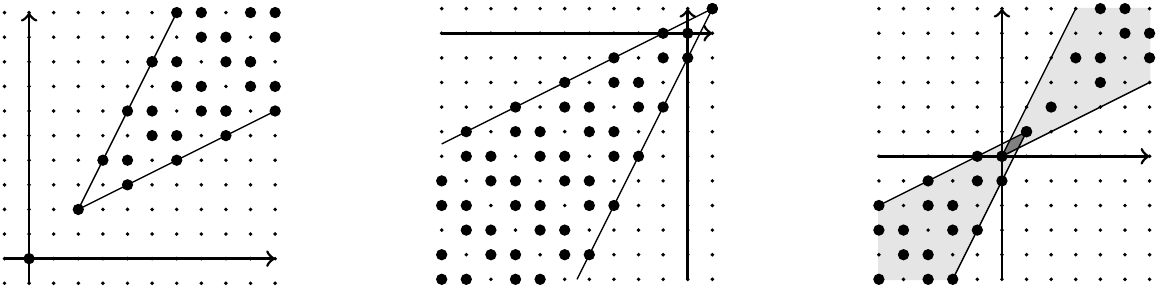}
 }
\caption{``Twin-spectra'' $\mathfrak{S}$ (on the left) and $\mathfrak{S}'$ (in the middle) and
their convex hulls (on the right).}
\label{fig:4}       
\end{figure}

The partition function $\sum_{\alpha\in\mathfrak{S}} z^\alpha$ converges in the domain $D=\{|z_1^2 z_2|<1$,
$|z_1 z_2^2|<1\}$ and equals
$$
	Z(z)=1+\frac{(1+z_1^2 z_2^2)z_1^2 z_2^2}{(1-z_1^2 z_2)(1-z_1 z_2^2)}.
$$ 
The development of $Z(z)$ in the domain $D'=\{|z_1^2 z_2|>1$, $|z_1 z_2^2|>1\}$ 
is again a partition function, i.e. it is a power series
$$
	Z(z)=\sum_{\alpha\in\mathfrak{S}'} z^\alpha
$$
with summation over the spectrum
$$\mathfrak{S}'=\{(0,0)\}\cup \{(-1,-1)-\mathbb{S}\}\cup \{(1,1)-\mathbb{S}\}$$ (see Fig.~\ref{fig:4} in the middle).

The full amoeba of the graph $\Gamma_Z$ corresponds to the polynomial
$$
	(w-1)(1-z_1^2z_2)(1-z_1z_2^2)-z_1^2z_2^2-z_1^4z_2^4
$$
in three variables $z_1, z_2, w$. Points $(0,0,1)$ and $(3,3,1)$ are vertices of the Newton polytope for this polynomial,
therefore the complement to the full amoeba of the graph $\Gamma_Z$ contains connected components $E_{0,0,1}$ and $E_{3,3,1}$.
Since the Laurent coefficients  of the developments of $Z$ in the domains $D$ and $D'$ are positive, the boundaries $\partial E_{0,0,1}$ and $\partial E_{3,3,1}$ are the $\Log$-images of the graph $\Gamma_Z$ over the real domains $D\cap \mathbb{R}^2_{+}$ and $D'\cap \mathbb{R}^2_{+}$. Consider the ``diagonal'' function
$$
	Z(t,t)=1+\frac{(1+t^4)t^4}{(1-t^3)^2}.
$$
The amoeba of its graph can be embedded in the amoeba $\mathcal{A}_{\Gamma_Z}$ by the mapping
$$
	i: (\log|t|, \log|Z(t,t)|)\mapsto (\log(t), \log(t), \log|Z(t,t)|).
$$
The boundaries of the components  $E_{0,1}$ and $E_{6,1}$ of the complement to the amoeba of the graph of $Z(t,t)$
are the $\Log$-images of pieces of the graph over the intervals $0<t<1$ and $1<t<\infty$, respectively.
The amoeba $\mathcal{A}_{\Gamma_Z}$  lives in the space $\mathbb{R}^3$ of variables $x_1, x_2, x_3$; and the plane
$x_1=x_2$ cuts out in the surfaces $\partial E_{0,0,1}$ and $\partial E_{3,3,1}$ two pieces, the images
$i(\partial E_{0,1})$ and $i(\partial E_{6,1})$,  respectively.
As in Example~\ref{exm:1}, the curves $\partial E_{0,1}$ and $\partial E_{6,1}$ have a common tangent line, lying below these curves,
since they are convex.

In view of the symmetry of $\mathcal{A}_{\Gamma_Z}$ with respect to the plane $x_1=x_2$, there
exists a common tangent plane $\tau$ to surfaces $\partial E_{0,0,1}$ and $\partial E_{3,3,1}$
with the property that $\tau$ crosses the common tangent line to the embeddings $i(\partial E_{0,1})$ and $i(\partial E_{6,1})$ symmetrically  with respect to the plane $x_1=x_2$. As it follows from results of Sect.~\ref{sec:6}, the vector $[u_1:u_2:1]$ is normal to the tangent plane $\tau$, 
if $u=(u_1, u_2)$ belongs to the intersection of interiors of convex hulls of the spectra $\mathfrak{S}$ and $\mathfrak{S}'$,
i.e. to the double-shaded rhombus on the right of Fig.~\ref{fig:4}. In general, the rhombus is divided by some curve $\gamma$ into two domains, such that the value of the entropy $S_u$ (corresponding to the ensemble with the spectrum $\mathfrak{S}$) 
is greater than that of the entropy  $S'_u$
(corresponding to the ensemble with the spectrum $\mathfrak{S}'$) in the first domain and is less in the second one. Perhaps,
this phenomenon may be considered as a tunnelling transition from one ensemble to another in a way to increase the entropy,
when we choose  the value of the energy  $u$ on $\gamma$. 
\end{example}

At the end of this section, we show that the notion of multidimensional spectrum, our starting point,
leads to the notions of the multidimensional temperature and the vector of thermodynamic forms. For this purpose,
we compute the differential of logarithm of a partition function assuming that the variables
$z_1,\ldots, z_n$ are positive and entries $\varepsilon_k$  of the spectrum $\{\varepsilon_k\}$
vary in some neighbourhood of lattice points in $\mathbb{R}^n$, i.e. we consider
the spectrum $\{\varepsilon_k\}$ to be variable.

In accordance with~(\ref{eq:Uj}) and (\ref{eq:ak})
$$
 d \log Z=(d_z+d_\varepsilon) \log Z=\sum_j z_j \frac{Z'_{z_j}}{Z}\frac{dz_j}{z_j}+\sum_k \sum_j \frac{Z'_{\varepsilon_k^j}}{Z} d\varepsilon_k^j=
$$
$$
	=\left< u, d\log z\right>+\sum_j \sum_k \frac{z^{\varepsilon_k^j}}{Z} \log z_j d\varepsilon_k^j=
	\left< u, d\log z\right> +\left< \log z, \frac{1}{N}\sum_k a_k d\varepsilon_k\right>.
$$
Hence, we get the following expression for the differential of the entropy 
$$
	d S=d \left( \log Z -\left<u,\log z\right>\right)=\left< -\log z, d u\right>
	+\left< \log z, \frac{1}{N}\sum_k a_k d\varepsilon_k\right>=\left<\frac{1}{T}, \omega\right>,
$$
where
$$
	\omega=(\omega_1,\ldots, \omega_n),\ T=(T_1,\ldots, T_n)
$$
denote the vector of the thermodynamic forms and the vector of the temperature with components
$$
	\omega_j=d u_j - 1/N \sum_k a_k d \varepsilon_k,\ T_j=-1/\log z_j.
$$

\section{The average value $\overline{a}$ of the admissible collections $\{a\}$}
\label{sec:5}

In the preceding section we gave a description, following Botzmann, of the most probable distributions of the ensemble. However, the method that was used is 
somewhat limited, since the extremal points (\ref{eq:ak}) for (\ref{eq:1}) are obtained by applying the Stirling formula to $a_k!$, and this is only justified for large
values of $a_k$. In the case of a scalar spectrum, the Darwin--Fowler method offers a possibility to avoid this drawback. It consists in a description of the
asymptotics of the averages of the occupation numbers. We shall analogously describe the asymptotics of the averages of the occupation numbers, when the
energy spectrum is composed of vector quantities. In this section we show that this problem is equivalent to the problem of describing the asymptotics of the
diagonal coefficients of a Laurent expansion of the meromorphic function $w/(w-Z(z))$.

\begin{definition}[\cite{DF1}, \cite{Sh}]
The \emph{average value of the admissible collections} $\{a\}$ is the collection $\overline{a}=(\overline{a}_k)$ of numbers
$$ 	\overline{a}_k=\frac{\sum_{a} a_k W(a)}{\sum_{a} W(a)}\,,$$
where the summation is over all admissible collections $a=(a_k)$. 
\end{definition}

For the study of the averages $\overline{a}_k$ we introduce the sum
\begin{equation}
\label{eq:7}
\sum_{a} W(a,\omega)=\sum_{a}\frac{N!}{a_0!a_1!\dots a_k!\dots} \omega_0^{a_0}\omega_1^{a_1}\dots \omega_k^{a_k}\dots
\end{equation}
over all admissible collections $a=(a_k)$. Here the $\omega_j$ are real parameters, all varying in a small neighborhood of $1$.
We remark that $W(a,I)=W(a)$, where $I=(1,1,\dots)$ is the all ones vector. Hence, for $\omega=I$ the quantity (\ref{eq:7}) expresses
the \emph{total number of states} of the ensemble. It is not difficult to see that
\begin{equation}\label{eq:new_star}
\overline{a}_k=\left.\frac{\partial}{\partial\omega_k}\log\sum_a W(a,\omega)\right|_{\omega=I}.
\end{equation}

As in \cite{DF1} and \cite{Sh} one proves the integral representation
\begin{equation}\label{eq:8}
\sum_{a} W(a,\omega)=\frac{1}{(2\pi \imath)^n} \int_{T_r} f^N(z) z^{-\mathcal{E}}\bigwedge_1^n\frac{dz_j}{z_j}\,,	
\end{equation}
where $T_r=\{|z_1|=r_1,\dots, |z_n|=r_n\},$ and the $r_j$ are chosen so small that on $T_r$ one has convergence of the series
$$
f(z)=f(z,\omega)=\sum_{k} \omega_k z^{\varepsilon_k}=\sum_{k} 
\omega_k z_1^{\varepsilon_k^1}\cdot\ldots\cdot z_n^{\varepsilon_k^n}\,.
$$
Since $f(z,I)=Z(z)$ we refer this series to be a variation of partition function.
Since the condition $0<\omega_k< 1+\delta$ is fulfilled, the domain of convergence $G'$ of this series is non-empty and contains 
the origin $z=0$.

We now introduce the function of $n+1$ variables
$$F(z,w)=\frac{w}{w-f(z)}\,,$$ 
which is meromorphic in the domain $G=G'\times\mathbb{C}_w$.  The polar hypersurface of $F$ is the graph
$$
\Gamma_f=\{(z,w)\in G: w=f(z)\}\,.
$$
Due to the fact that $\varepsilon_0=\bar{0}$, the closure $\mathcal{N}$ of the convex hull of the summation support of the
series $w-f(z)$ contains the vertex $\nu=(\bar{0},1)$. According to Theorem~\ref{thm:1} this vertex corresponds to a connected component $E_{\bar{0},1}$ of the complement of the amoeba $\mathcal{A}_V$. Using a geometric progression we expand $F$ in a Laurent series, convergent in $\Log^{-1} (E_{\bar{0},1}) \subset \{(z,w)\in G: |w|>|f(z)|\}$:

\begin{equation} \label{eq:La}
F(z,w)=\sum_{N=0}^{\infty}\frac{f^N}{w^N}=\sum_N \sum_\mathcal{E} C_{\mathcal{E},-N} z^\mathcal{E} w^{-N}.
\end{equation}

For the Laurent coefficients $C_{\mathcal{E},-N}$ of this series one has the integral representation
$$
C_{\mathcal{E},-N}=\frac{1}{(2\pi\imath)^{n+1}} \int_{\Log^{-1}(x)}\frac{w}{w-f(z)} z^{-\mathcal{E}}w^N \bigwedge_1^n \frac{dz_j}{z_j}\wedge\frac{dw}{w}\,,
$$
where $x\in E_{\bar{0},1}$. Performing the integration with respect to $w$ in this last integral, we immediately obtain (\ref{eq:8}).

We thus find that the problem of describing the asymptotics of the sum (\ref{eq:7}) is equivalent to the same problem for the coefficients $C_{\mathcal{E},-N}$
of the series (\ref{eq:La}), for $\mathcal{E}=u\cdot N$, with $u$ being the vector of average energies. That is, it amounts to finding the asymptotics of the
diagonal coefficients $C_{(u,-1)\cdot N}$ with direction vector $q=(u,-1)$.

\section{The asymptotics of the average values $\overline{a}_k$}
\label{sec:6}

Let the point $(z_*,w_*)$   on the graph $\Gamma_f$ of the variation $f$ of partition function be such that $\Log (z_*,w_*)\in\partial E_{\bar{0},1}$. 

Since $\partial E_{0,1}$ is a part of the amoeba contour, the first coordinates $z_*$ of the given point on the graph satisfy (\ref{eq:n2}) for some $u\in\Real^n_{+}$, and the coordinate
$w_*$ is uniquely determined by $z_*$. As we let $\omega$ tend to the vector $I=(1,1,\ldots)$, we get $f\to Z$, and the point
$(z_*,w_*)$ moves to the point $(z,w)=(z(u),w(u))$, whose logarithmic image lies on the boundary $\partial E_{\bar{0},1}$ of the 
component $E_{\bar{0},1}$ of the complement to the amoeba of the graph $\Gamma_Z=\{w=Z(z)\}$ of the partition function of the ensemble. Besides that, $z(u)$ satisfies system~(\ref{eq:Uj}).

\begin{theorem}
\label{thm:3}
Suppose that the spectrum $\mathfrak{S}=\{\varepsilon_k\}$ generates the lattice $\mathbb{Z}^n$, and that the point
$z=z(u)\in\Real^n_{+}$ satisfies the system (\ref{eq:Uj}). Then, as $N\to\infty$, the average values $\overline{a}_k$ for the occupation numbers
of energy $\varepsilon_k$ has the form
\begin{equation}\label{eq:new_starstar}
\overline{a}_k\sim N \left.\frac{z^{\varepsilon_k}}{Z(z)}\right|_{z=z(u)}
\end{equation}
and they coincide with most probable values of $a_k$.
\end{theorem}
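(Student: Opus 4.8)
The plan is to combine the reduction carried out in Section~\ref{sec:5} with the asymptotic formula of Theorem~\ref{thm:2}. Write $B_N(\omega)=\sum_a W(a,\omega)$ for the generating sum (\ref{eq:7}). By (\ref{eq:new_star}) one has $\overline{a}_k=\partial_{\omega_k}\log B_N(\omega)\big|_{\omega=I}$, while the residue computation of Section~\ref{sec:5} identifies $B_N(\omega)$ with the diagonal Laurent coefficient $C_{(u,-1)\cdot N}$ of the meromorphic function $F(z,w)=w/(w-f(z))$ in the $n+1$ variables $(z,w)$, taken along the direction $q=(u,-1)$. Thus everything reduces to understanding, as $N\to\infty$, both $B_N(\omega)$ and its logarithmic $\omega_k$-derivative near $\omega=I$.

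Next I would apply Theorem~\ref{thm:2} to $B_N(\omega)=C_{(u,-1)\cdot N}$. Its hypothesis, simplicity of $\partial E_{\bar 0,1}$, holds by Proposition~\ref{prop:3}: the spectrum $\mathfrak{S}$ generates $\mathbb{Z}^n$ by assumption, $\bar 0\in\mathfrak{S}$, and for $\omega$ close to $I=(1,1,\ldots)$ the coefficients of $f(z,\omega)$ stay positive. Denoting by $(z,w)=(z(u,\omega),w(u,\omega))$ the critical point $\gamma^{-1}(q)$ on the graph $\Gamma_f$, formula (\ref{eq:asymp}) in the $n+1$ variables $(z,w)$ — where the exponent is $\frac{1-(n+1)}{2}=-\frac{n}{2}$ and $(z,w)^{-q\cdot N}=z^{-uN}w^{N}$ — gives
\[
\log B_N(\omega)=N\bigl(\log w(u,\omega)-\left<u,\log z(u,\omega)\right>\bigr)-\frac{n}{2}\log N+\log C(q,\omega)+O(N^{-1}).
\]

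The final step is differentiation in $\omega_k$ at $\omega=I$. The constant $-\frac{n}{2}\log N$ drops out and $\partial_{\omega_k}\log C(q,\omega)=O(1)$ is negligible against the order-$N$ term, so
\[
\overline{a}_k\sim N\,\partial_{\omega_k}\!\bigl(\log w(u,\omega)-\left<u,\log z(u,\omega)\right>\bigr)\big|_{\omega=I}.
\]
Here the critical point obeys $w=f(z,\omega)$ together with the analogue of (\ref{eq:Uj}) for $f$, namely $z_jf'_{z_j}/f=u_j$. These are precisely the stationarity conditions for the phase, so by the envelope principle the derivative of the critical value equals the explicit $\omega_k$-partial taken at the frozen critical point: the contributions $\left<u,\partial_{\omega_k}\log z\right>$ produced by the motion of the saddle cancel, and only $\partial_{\omega_k}f=z^{\varepsilon_k}$ survives. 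Concretely, expanding $\partial_{\omega_k}\log w$ by the chain rule and inserting $z_jf'_{z_j}=u_jf$ gives $\partial_{\omega_k}\log w=\left<u,\partial_{\omega_k}\log z\right>+z^{\varepsilon_k}/f$, whence $\overline{a}_k\sim N\,z^{\varepsilon_k}/f\big|_{\omega=I}=N\,z^{\varepsilon_k}/Z(z)\big|_{z=z(u)}$, which is (\ref{eq:new_starstar}). Comparison with (\ref{eq:ak}) then shows these averages coincide with the most probable occupation numbers.

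The main obstacle is the legitimacy of differentiating the asymptotic expansion (\ref{eq:asymp}) in the parameter $\omega_k$ and of discarding $\partial_{\omega_k}\log C(q,\omega)$ as $O(1)$; this needs $z(u,\omega)$ to depend smoothly on $\omega$ — guaranteed by the local invertibility of $\gamma$ built into simplicity, via the implicit function theorem — together with uniform control of the $O(N^{-1})$ remainder for $\omega$ near $I$. A more hands-on route, matching the integration techniques hinted at in the introduction, is to differentiate under the integral in (\ref{eq:8}) first: since $\partial_{\omega_k}f=z^{\varepsilon_k}$, formula (\ref{eq:new_star}) becomes $\overline{a}_k=N\,[z^{uN-\varepsilon_k}]Z^{N-1}\big/[z^{uN}]Z^{N}$, a quotient of two saddle-point integrals sharing the phase $\log Z-\left<u,\log z\right>$ and hence the saddle $z(u)$; the factors $N^{-n/2}$ and the Gaussian prefactors cancel in the ratio, and the genuinely technical point is to verify that they match to leading order, again yielding (\ref{eq:new_starstar}).
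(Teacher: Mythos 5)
Your proposal is correct and follows essentially the same route as the paper: reduce $\sum_a W(a,\omega)$ to the diagonal coefficient $C_{(u,-1)\cdot N}$, invoke Proposition~\ref{prop:3} for simplicity of $\partial E_{\bar 0,1}$, apply Theorem~\ref{thm:2} in $n+1$ variables, and differentiate the resulting asymptotics in $\omega_k$, with the saddle-motion term killed by the stationarity of the phase --- exactly the paper's observation that $\left<\nabla_z\varphi(z_*),\partial_{\omega_k}z_*\right>=0$. Your explicit attention to the legitimacy of differentiating the asymptotic expansion in $\omega$ (via smooth dependence of the saddle on $\omega$ and uniformity of the remainder) addresses a point the paper passes over silently, but the argument itself is the same.
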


\begin{proof}

By assumption the spectrum $\mathfrak{S}$ generates the lattice $\mathbb{Z}^n$ and hence, according to Proposition~\ref{prop:3}
the boundary $\partial E_{\bar{0},1}$ is simple. Therefore we can apply Theorem~\ref{thm:2} for the asymptotics of the
diagonal sequence of Laurent coefficients of the series (\ref{eq:La}):
$$
C_{(u,-1)\cdot N}\sim C(q)\cdot N^{-\frac{n}{2}}\cdot (z_*^{-u}(u)w_*(u))^{N},\quad N\to+\infty\,.
$$

Hence, taking into account the summary in Sect.~\ref{sec:5}, we find that the asymptotics of the total number of states, as $N\to+\infty$,
has the form
$$
\sum_a W(a,\omega)\sim C(q)\cdot N^{-\frac{n}{2}}\cdot (z_*^{-u}(u)\cdot f(z_*(u)))^{N}.
$$

Now, direct calculation leads us to the asymptotic equality
$$
\frac{\partial}{\partial \omega_k} \log \sum_a W(a,\omega)\ \sim \
N\cdot \left<\nabla_z \varphi(z_*(u)),\frac{\partial}{\partial\omega_k }z_*(u)\right>+
N\cdot\frac{z^{\varepsilon_k}_*(u)}{f(z_*(u))}\,,
$$
where $\varphi=\log(z^{-u}f(z))$ denotes the phase (see the proof of Theorem~\ref{thm:2}). In the right hand side of the last formula
the first term is equal to zero, because $z_*$ is a critical point for the phase $\varphi$. Therefore, setting $\omega=I$, we get from 
the formula (\ref{eq:new_star}) the desired asymptotics (\ref{eq:new_starstar}).\qed
\end{proof}

Let us now raise the question about what the admissible values are for the vector $u$ of average energies, that guarantee the existence of a solution $z(u)\in \Real_{+}^n$ to the system (\ref{eq:Uj}), and hence provide the asymptotics (\ref{eq:new_starstar}).

In the work of Darwin and Fowler \cite{DF1},\cite{DF2} this question was not considered. Apparently, it was first paid attention to in
\cite[Sect.~4.5.1]{Fedoruk}, where it was observed that if the partition function is a polynomial of degree $d$, then the admissible
average energies must be taken within the interval $0<u<d$, that is, in the interior of the convex hull of the numbers
$0=\varepsilon_0<\varepsilon_1<\ldots<\varepsilon_k=d$.

The raised question is answered by the following theorem, where we use the notation $\mathcal{N}^{\circ}$ for the interior of the
convex hull in $\Real^n$ of the spectrum $\mathfrak{S}=\{\varepsilon_k\}$.
\begin{theorem} 
\label{thm:4}
Suppose that the spectrum $\mathfrak{S}=\{\varepsilon_k\}$ generates the lattice $\mathbb{Z}^n$. Then for every value of the average 
energy $u\in \mathcal{N}^{\circ}$ the system (\ref{eq:Uj}) has a unique solution $z=z(u)$ in $\mathbb{R}^n_{+}$, and hence for
$u\in \mathcal{N}^{\circ}$ the average values $\overline{a_k}$ coincide with the most probable ones. 
\end{theorem}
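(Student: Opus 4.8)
The plan is to show that the map $u \mapsto z(u)$ defined by the system~(\ref{eq:Uj}) is a bijection from $\mathbb{R}^n_+$ onto $\mathcal{N}^\circ$. The natural way to do this is to pass to logarithmic coordinates and recognize the system~(\ref{eq:Uj}) as a gradient equation. Setting $x_j = \log z_j$ so that $z_j = e^{x_j}$, the left-hand side of~(\ref{eq:Uj}) becomes $\partial_{x_j} \log Z(e^{x_1},\ldots,e^{x_n})$. Thus if I introduce the function
\begin{equation}
\label{eq:phi-convex}
\Phi(x) = \log Z(e^{x_1},\ldots,e^{x_n}),
\end{equation}
then the system~(\ref{eq:Uj}) reads simply $\nabla \Phi(x) = u$. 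So the claim is that $\nabla \Phi \colon \mathbb{R}^n \to \mathbb{R}^n$ is a diffeomorphism onto $\mathcal{N}^\circ$. First I would establish that $\Phi$ is \emph{strictly convex} on $\mathbb{R}^n$: since $Z(e^x) = \sum_k e^{\langle \varepsilon_k, x\rangle}$ is a sum of exponentials with positive coefficients (all equal to $1$), $\Phi$ is the logarithm of a Laplace-type sum, and its Hessian is the covariance matrix of the probability distribution on $\mathfrak{S}$ that assigns weight $e^{\langle\varepsilon_k,x\rangle}/Z$ to $\varepsilon_k$. This covariance matrix is positive definite precisely because the spectrum $\mathfrak{S}$ generates $\mathbb{Z}^n$ as a group and hence affinely spans $\mathbb{R}^n$, so the support is not contained in any affine hyperplane. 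Strict convexity makes $\nabla\Phi$ injective, which gives uniqueness of the solution $z(u)$.

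Next I would identify the image of $\nabla\Phi$. The gradient $\nabla\Phi(x)$ is exactly the expected value $\sum_k \varepsilon_k\, e^{\langle\varepsilon_k,x\rangle}/Z$, i.e.\ a convex combination of the spectrum points $\varepsilon_k$ with strictly positive weights; hence $\nabla\Phi(x) \in \mathcal{N}^\circ$ for every $x$, showing the image is contained in the open convex hull. For surjectivity onto all of $\mathcal{N}^\circ$ I would use the standard fact from convex analysis that the gradient image of a strictly convex, differentiable function is an open convex set, and then show it cannot be a proper subset of $\mathcal{N}^\circ$ by a boundary/recession argument: as $x\to\infty$ along any direction $s$, the sum $Z(e^x)$ is dominated by the terms $\varepsilon_k$ maximizing $\langle\varepsilon_k,s\rangle$, so $\nabla\Phi(x)$ approaches the face of $\mathcal{N}$ exposed by $s$; letting $s$ range over all directions shows the image of $\nabla\Phi$ exhausts every point of $\mathcal{N}^\circ$ arbitrarily close to the boundary. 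Equivalently, one may invoke the Legendre duality already used in Sect.~\ref{sec:4}: the entropy $S_u$ is the Legendre transform of $\Phi$, and its effective domain is exactly $\mathcal{N}^\circ$, so the supremum defining $S_u$ is attained at an interior critical point $x = \nabla S(u)$ precisely when $u \in \mathcal{N}^\circ$.

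The main obstacle I anticipate is controlling the \emph{boundary behavior} and confirming that the image of $\nabla\Phi$ is all of $\mathcal{N}^\circ$ rather than some smaller open convex subset. Strict convexity alone guarantees an open convex image and injectivity, but surjectivity onto the full interior requires the asymptotic analysis of $\nabla\Phi$ near infinity in each direction, which is where the hypothesis that $\mathfrak{S}$ generates the lattice (hence affinely spans $\mathbb{R}^n$) does its essential work. The final assertion, that the average values $\overline{a}_k$ then coincide with the most probable ones, is immediate: once existence and uniqueness of $z(u)\in\mathbb{R}^n_+$ is secured for $u\in\mathcal{N}^\circ$, the hypotheses of Theorem~\ref{thm:3} are met, and~(\ref{eq:new_starstar}) gives $\overline{a}_k \sim N\,z^{\varepsilon_k}/Z(z)|_{z=z(u)}$, which is exactly the most probable value~(\ref{eq:ak}) furnished by Statement~\ref{state:1}.
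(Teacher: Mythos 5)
Your argument is correct in substance but takes a genuinely different route from the paper's. You work directly on the base, with the convex function $\Phi(x)=\log Z(e^{x_1},\ldots,e^{x_n})$: uniqueness of $z(u)$ comes from strict convexity of $\Phi$ (the Hessian is the covariance matrix of the Gibbs weights, positive definite because $\mathfrak{S}$ affinely spans $\mathbb{R}^n$), and surjectivity of $\nabla\Phi$ onto $\mathcal{N}^{\circ}$ from the asymptotics of $\nabla\Phi$ near the boundary of its domain — classical exponential-family/Legendre duality. The paper instead stays one dimension up in the amoeba picture: uniqueness is delegated to the simplicity of $\partial E_{\bar{0},1}$ (Proposition~\ref{prop:3}, which is where the lattice-generation hypothesis enters for them), and the set of attained normal directions is identified purely combinatorially via Theorem~\ref{thm:1}: the recession cone of $E_{\bar{0},1}$ is the dual cone of the Newton polyhedron of $w-Z(z)$ at the vertex $(\bar{0},1)$, so the normal cone $K_{E_{\bar{0},1}}$ is spanned by the edges emanating from that vertex and equals the cone over $\{(u,-1):u\in\mathcal{N}^{\circ}\}$. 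Since $\partial E_{\bar{0},1}$ is exactly the graph of your $\Phi$ (by positivity of the coefficients of $Z$), the two proofs are dual descriptions of the same object; yours is more elementary and self-contained, while the paper's version integrates uniformly with the machinery of Theorems~\ref{thm:1}--\ref{thm:3}.

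Two details in your sketch need care. First, for an infinite spectrum $\Phi$ is defined only on the convex set $\mathcal{G}=\Log(G)$, not on all of $\mathbb{R}^n$ (for the Planck-type spectrum $\mathbb{N}$ one has $\mathcal{G}=\{x<0\}$), so your limiting analysis must cover $x\to\partial\mathcal{G}$, where $Z$ diverges and $\nabla\Phi$ escapes toward the unbounded faces of $\mathcal{N}$, in addition to $x\to\infty$ inside $\mathcal{G}$; this is the steepness of the family. Second, the ``standard fact'' that the gradient image of a strictly convex differentiable function is an open convex set is not available without that steepness: openness follows from the nondegenerate Hessian, but identifying the image as exactly $\mathcal{N}^{\circ}$ genuinely requires the boundary/recession analysis you flagged as the main obstacle, so that argument should be carried out rather than replaced by a citation. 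The concluding reduction to Theorem~\ref{thm:3} and formula (\ref{eq:ak}) is exactly as in the paper.
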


\begin{proof}
Lifting the solutions $z(u)$ of the system of equations (\ref{eq:Uj}) for $u\in\Real^n$ to the graph $\Gamma_Z$ of the partition 
function of the ensemble, we obtain the criticial values for the mapping $\left. \Log\right|_V$. On the amoeba $\mathcal{A}_{\Gamma_Z}$ of the graph, these solutions parametrize its contour. In particular, the solutions $z(u)\in\Real^n_{+}$ that are of interest to us parametrize
the boundary of the complement component $E_{\bar{0},1}$. Thanks to the fact that the spectrum generates the lattice $\mathbb{Z}^n$,
we know from Proposition~\ref{prop:3} that to each point on $\partial E_{\bar{0},1}$ there corresponds a unique vector $q\in K_{E_{\bar{0},1}}$. 
Therefore, in order to obtain all solutions $z(u)$ from $\Real^n_{+}$ one must go through all vectors $q$ from the component cone $E_{\bar{0},1}$.

\begin{figure}
\centering
\resizebox{0.75\textwidth}{!}{%
  \includegraphics{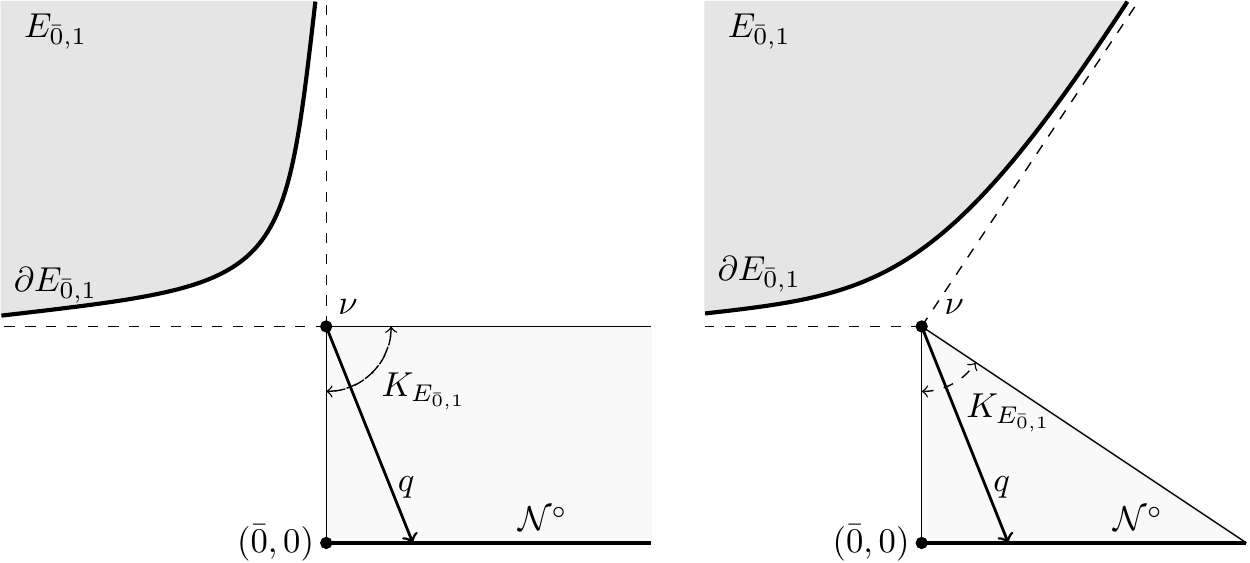}
}
\caption{The relations between $\mathcal{N^\circ}, K_{E_{\bar{0},1}}$ and $E_{\bar{0},1}$ for a finite (left) 
and an infinite (right) spectrum.}
\label{fig:5}       
\end{figure}

By Theorem~\ref{thm:1} the recession cone of the component $E_{\bar{0},1}$ is the dual cone to $\hat{\mathcal{N}}$ at the vertex $\nu=(\bar{0},1)$,
where $\hat{\mathcal{N}}$ denotes the closure of the convex hull of the summation support of the series $Q=w-Z(z)$. (See Figure~\ref{fig:5} where the
recession cone is bounded by dashed lines.) The outward normals of those facets of the polyhedron $\hat{\mathcal{N}}$ that come together at the vertex $\nu$ span this dual cone. Therefore, the sought cone $K_{E_{\bar{0},1}}$ is spanned by the edges of $\hat{\mathcal{N}}$ that emanate from the vertex $\nu$, and thus $K_{E_{\bar{0},1}}$ consists of all vectors of the form $q=(u,-1)$,
with $u\in\mathcal{N}^\circ$.\qed
\end{proof}
 
We conclude with some remarks and illustrations to Theorem~\ref{thm:4}.
First, the statement of the theorem still holds if one shifts the spectrum $\mathfrak{S}$
by a noninteger vector. For example, the domain of admissible average values of energy in the case
of the Plank oscillator with the spectrum $\{1/2+\mathbb{N}\}$ equals $\{u>1/2\}$.  Such domain for the Fermi
oscillator with the spectrum $\mathfrak{S}=\{0,1\}$ is the interval $\{0<u<1\}$ (see \cite[ch.~4]{Sh}).
The latter case is depicted on the right of Fig.~\ref{fig:5}. Example~\ref{exm:2} of Sect.~\ref{sec:4} deals
with the ``twin-spectra'', and the sectors on Fig.~\ref{fig:4} are the domains of admissible average values of energy
in the corresponding cases. These sectors have a nonempty intersection, the double-shaded
rhombus (Fig.~\ref{fig:4}, on the right).
\begin{acknowledgements} 

The second author was supported by RFBR  grant 09-09-00762 and ``M\"obius Competition'' fund for support to  young scientists. The third author was  supported by the Russian Presidential grant N\v S-7347.2010.1 and by RFBR 11-01-00852.
\end{acknowledgements}

\end{document}